\theoremstyle{plain}\newtheorem{theorem}{Theorem}
\theoremstyle{plain}\newtheorem{proposition}{Proposition}
\theoremstyle{plain}\newtheorem{corollary}{Corollary}
\theoremstyle{plain}\newtheorem{lemma}{Lemma}
\theoremstyle{plain}
\theoremstyle{definition}
\newtheorem{definition}{Definition}
\newcommand{\supp}[1]{\text{supp}(#1)}
\newcommand{\argmax}{\mathop{\rm arg~max}\limits}
\let\@makefntextOring\@makefntext
\def\@makefntext#1{\@makefntextOring{\baselineskip=15pt#1}}
\definecolor{navyblue}{rgb}{0.0, 0.0, 0.5}
\definecolor{green}{rgb}{0.2, 0.5, 0.2}
\patchcmd{\thanks}{#1}{\protect\doublespacing}{}{}
\DeclareFontShape{OT1}{cmr}{m}{n}{<->cmr10}{}
\title{Value of History in Social Learning: Applications to Markets for History\thanks{\protect \onehalfspacing 
Sato acknowledges the financial support from the JSPS KAKENHI Grant 24KJ0100.
Shimizu acknowledges the financial support from the JSPS KAKENHI Grant 23KJ0667. 
All remaining errors are our own.
}}
\author{
\Large Hiroto Sato\thanks{\protect \onehalfspacing 
Department of Economics, Nagoya University, Furo-cho, Chikusa-ku, Nagoya, 464-8601, Japan. 
Email: \url{sato.hiroto.s9@f.mail.nagoya-u.ac.jp}.
}
\and
\Large Konan Shimizu\thanks{\protect \onehalfspacing 
Graduate School of Economics, The University of Tokyo, 7-3-1, Hongo, Bunkyo-ku, Tokyo, 113-0033, Japan. 
Email: \url{shimizu-konan@g.ecc.u-tokyo.ac.jp}.}
} 
\date{\today}
\begin{document}

\begin{titlepage}
\maketitle

\begin{abstract}
    In social learning environments, agents acquire information from both private signals and the observed actions of predecessors, referred to as history. 
    We define the {\it value of history} as the gain in expected payoff from accessing both the private signal and history, compared to relying on the signal alone. 
    We first characterize the information structures that maximize this value, showing that it is highest under a mixture of full information and no information. 

    We then apply these insights to a model of {\it markets for history}, where a monopolistic data seller collects and sells access to history. 
    In equilibrium, the seller's dynamic pricing becomes the value of history for each agent.
    This gives the seller incentives to increase the value of history by designing the information structure.
    The seller optimal information discloses less information than the socially optimal level.
\end{abstract}


\setcounter{page}{0}
\thispagestyle{empty}
\end{titlepage}

\newcolumntype{C}[1]{>{\centering\arraybackslash}p{#1}}

\setlength{\abovedisplayskip}{5pt}
\setlength{\belowdisplayskip}{5pt}

\section{Introduction} \label{sec:introduction}
In many economic and social environments, individuals make decisions not only based on their own private information, but also by observing the actions of others—a phenomenon known as social learning. 
The sequence of observed past actions is referred to as {\it history}. 
This informational externality gives rise to an instrumental value of history, much like the value of information. 
Understanding the the value of history can help clarify how its value depending on the information structure and how platforms or institutions should manage or monetize it.

To illustrate this point, the value of history is becoming increasingly significant in the modern, data-driven economy.
Maintaining and storing historical data can be costly, particularly for digital platforms that support long-term information aggregation or recommendation systems.
Quantifying the value of history is therefore essential for justifying these costs and evaluating the return on investment in record-keeping infrastructure.
Moreover, history is not merely a passive record—it can be actively transacted, giving rise to what we term {\it markets for history},
historical data is treated as a commodity, bought and sold among participants, and thus the value of history directly shapes its price.
Examples include financial markets, where platforms like Bloomberg sell data on the past actions of investors, and credit reporting, where agencies such as Equifax and FICO aggregate and monetize lending and repayment histories.

Our baseline model builds on the classical framework of sequential social learning \citep{banerjee1992simple,bikhchandani1992theory,smith2000pathological}. 
In this model, there exists a binary, payoff-relevant state of the world. 
A sequence of homogeneous agents make binary decisions one after another. 
Each agent observes the actions of all predecessors (history) and also receives a private signal independently drawn from an identical information structure. 
We define the {\it value of history} for agent $i$ as the difference in expected payoff of agent $i$ between the case where both the private signal and history are observed and the case where only the private signal is used. 
Aggregating across agents, we define the {\it social value of history} as the discounted sum of individual values of history over all agents.

To benchmark our analysis, we first observe that under full information (i.e., perfectly informative private signals) or no information (i.e., uninformative signals), the value of history for each agent is zero. 
Thus, the value of history arises when agents face the intermediate precision of information.
This observation raises a natural question: {\it Which information structure maximizes the value of history?}
or {\it What is the upper bound of the value of history without observing the underlying information structure?}
Our result shows that, for any information structure, there exists an information structure consists of the mixture of full and no information such that it achieves the higher value of history for all agents.
Technically, our result relies on the analysis of \citet{sato2025value}.\footnote{They analyze the comparison of information structures in social learning model.}

This result highlights a subtle interaction between private and social sources of information and has two main implications.
The first is metaphorical. 
Specifically, our results provide a prediction set for the value and the social value of history, even without knowledge of the underlying information structures.
This interpretation can inform the regulation of information environments in settings where social learning is present.
For example, when a platform manages a system that records past behavior, the existence of such a system is justified under some information structure if and only if its cost is below the upper bound of the social value of history.

The second interpretation is more literal and relates to the design and management of the information structure, which naturally arises in the context of markets for history, as mentioned earlier.
In this extended model, a monopolistic data seller collects and records the history taken by agents. 
We assume that the seller can collect past actions without any cost by the agents who purchased history, abstracting away incentive and enforcement issues.
The seller then sets a dynamic price and offers access to the history to future agents.\footnote{This assumption is just for simplification. 
We can extend the model in which the seller can only change the price at some specific periods.
We discuss this point in Section \ref{sec: sticky price}.} 
Before the decision making, agents may choose to purchase the history and after then they observe the private signals, which shapes the posterior beliefs.\footnote{The timing of observations captures the following situation: Investors repeatedly determine whether they invest to a particular opportunity. 
Then, they usually contract with Bloomberg in front of each inspection related with investment decisions. 
}
Although this setup abstracts from some complexities (e.g., the joint sale of private information and history), it captures the essential strategic and informational dynamics of markets for history. 
It allows us to analyze how the value of history shapes market demand, how prices are formed, and how welfare and efficiency are affected by the emergence of such data markets.

In this set-up, buyers purchase the history if and only the value of purchasing the history, which is the expected payoffs by observing both the history and private signal minus price, is higher than the payoffs by only observing the private signal.
Thus, the seller can increase the price of each buyers up to the value of history, and hence at the equilibrium, the seller makes take-it-or-leave-it offers at this level.
By this observation, the seller surplus at the equilibrium becomes the social value of history.
Moreover, the dynamic pricing is efficient in this market as it does not produce agents do not purchase the history (where the history of such agents is not recorded). 

By this observation, the social surplus and the seller surplus are determined by the precision of the underlying information structure.
We then apply Theorem \ref{thm:ternary_signal} to clarify the role of information in the markets for history.
Proposition \ref{prop: buyer_seller_socially_optimal_info} shows that the buyer, the seller, and the social welfare (which is defined by the weighted sum of them) are maximized by some mixture of full and no information.
This revelation principal gives us the closed form expression of these surpluses.
It shows that the buyer surplus is maximized by full information.
Moreover, the seller surplus is maximized by the intermediate information, which closes to no information as the seller becomes enough patient.
Thus, the social surplus is maximized at the level between the buyer-optimal and the seller-optimal levels.

\subsection{Related Literature}
The pioneering works by \citet{blackwell1951comparison, blackwell1953equivalent} investigate the ordinal ranking of information structures across the universal domain of decision problems.\footnote{
In contrast, some existing studies compare experiments within particular types of decision problems or restricted classes of experiments \citep{lehmann1988comparing, persico2000information, athey2018value, ben2024new}.
See also \citet{de2017instrumental} for an analysis of the cardinal value of information, and \citet{frankel2019quantifying} for an ex post approach to quantifying information and uncertainty.
}
Given a fixed decision problem, the value of information is easily quantified by the instrumental value of the information structure.
In contrast to this extensive literature, we focus on the value of history, defined as the marginal value of the additional information obtained through observing history.

Our formulation relates to the literature on social learning, pioneered by \citet{banerjee1992simple} and \citet{bikhchandani1992theory}, with a central issue, raised by \citet{smith2000pathological}, being whether asymptotic learning occurs.\footnote{Contrary to this literature, \citet{liang2020complementary} studies social learning with correlated information structures. Other important questions include convergence rates, as explored by \citet{hann2018speed} and \citet{rosenberg2019efficiency}.} 
Research on social learning with limited or noisy observation has been widely studied \citep{ccelen2004observational, acemoglu2011bayesian, lobel2015information, le2017information, arieli2019multidimensional, arieli2021general, kartik2024beyond,xu2023social},\footnote{See also \citet{banerjee2004word}, \citet{gale2003bayesian}, \citet{callander2009wisdom}, and \citet{smith2013rational} for studies on observational learning where agents observe only summary statistics of past actions.} alongside work on costly observation, which endogenously limits agents' ability to observe \citep{kultti2006herding, kultti2007herding, song2016social}.\footnote{See \citet{hendricks2012observational}, \citet{mueller2016social}, and \citet{ali2018herding} for studies on costly (direct/private) information.} 
We focus on the value of history and in our application for the markets for history, agents must pay a price, strategically set by the seller, to access the entire history of actions.
Thus, our model belongs to the literature of learning with limited or costly observation of history, but the primal focus is the interplay between the value of history and the information structure rather than whether the asymptotic learning occurs.

Recent studies on social learning focus on information design, balancing the benefits and costs of herd behavior (\citet{arieli2023herd,lorecchio2022persuading,parakhonyak2023information,le2017information,arielipositive}).\footnote{See also the mechanism design approach by \citet{sgroi2002optimizing} and \citet{smith2021informational}.} 
These papers usually focus on the objective about the probability of taking particular action or agents' expected payoffs in the limit.
Technically, we solve the information design problem to maximize the value of history and apply this result into the maximization of the social welfare in the market of history. 

Our focus of applications departs from the growing literature on markets for information \citep{admati1986monopolistic, admati1988selling, admati1990direct, esHo2007price, bergemann2015selling, bergemann2018design, mekonnen2023persuaded}.\footnote{See the unified survey by \citet{bergemann2019markets} for an overview of this literature.} 
Unlike these works, which emphasize selling direct information, we focus on markets for history, where the seller prices the history of past actions. 
The notable feature of the market for history is that the seller's pricing affects the accumulation of information in the society.
In this sense, our focus is the market transacting data with the process of information aggregation.


\section{Model} \label{sec:model}
There are binary state $\Omega=\{L,H\}$ with uniform prior.
There is an infinite sequence of agents indexed by $i=1,2,\dots,$ arriving sequentially.
The periods are discrete ($t=0,1,\dots$), and each agent $i$ takes an action at period $i$ from a binary action set $A=\{0,1\}$.
A common payoff function $u: A\times \Omega\to \mathbb{R}$ determines each agent's payoff. 
The payoff of agent $i$ depends solely on their own action and the state, independent of actions taken by other agents.
As in the literature, we assume that payoff function $u:A\times \Omega\to \mathbb{R}$ is given by\footnote{The analysis is same if we alternatively consider the setting in which $u(a,\omega)=1$ if $a=\omega$ and $u(a,\omega)=0$ otherwise.
} 
\[
u(a,\omega)=
\begin{cases}
1/2 & \text{ if } a=1 \text{ and } \omega=H, \\
-1/2 & \text{ if } a=1 \text{ and } \omega=L, \\
0 & \text{ otherwise }.
\end{cases}
\]

The timing of this game is as follows:
At period $0$, nature first determines the true state, which remains unchanged throughout the game. 
In each period $i$, agent $i$ first observes the entire history, which consists of the actions of all preceding agents ($1,2,\dots, i-1$). 
Additionally, agent $i$ receives a private signal $s\in S$, drawn (conditionally) independently from an identical information structure $\pi: \Omega\to\Delta(S)$.
For simplicity, we assume that $S$ is finite.
Following these observations, agent $i$ selects an action from the action set $A$.
Let $\sigma_{i}: A^{i-1}\times S\to \Delta(A)$ be a strategy of agent $i$, which is a mapping from history and private signal to the distribution over actions.
We denote $\bm{\sigma}$ as a strategy profile.

Let $V_{i}(\pi,\bm{\sigma})$ be the ex-ante expected payoff of agent $i$ at $\bm{\sigma}$ under $\pi$, that is, 
\[
    V_i(\pi,\bm{\sigma})=  \mathbb{E}_\omega\left[\sum_{(a_1,\cdots,a_i)\in A^i} \sum_{(s_1,\dots,s_i)\in S^{i}}\prod_{k=1}^{i}\pi(s_k|\omega)\sigma_k(a_k|a_1,\dots,a_{k-1},s_k)u(a,\omega)\right]. 
\]
Moreover, for simplicity, we denote $V(\pi)=V_{1}(\pi,\bm{\sigma})$. 

We say that the strategy profile $\bm{\sigma}^*$ is a Bayes-Nash equilibrium (hereafter referred to as an equilibrium) under $\pi$ if
\begin{align*}
    V_i(\pi,\bm{\sigma}^*)\geq V_i(\pi,(\sigma_i,\bm{\sigma_{-i}^*}))
\end{align*}
for all $\sigma_i$ and $i$. Define $BNE(\pi)$ as the set of all equilibria under $\pi$.
\section{Value of History} \label{sec:main}
Our focus is the instrumental value of history as defined in the following.
\begin{definition}
    The {\it value of history for agent $i$} is 
    \[
    \mathcal{V}_{i}(\pi)=\max_{\bm{\sigma}\in B^i(\pi)}V_{i}(\pi,\bm{\sigma})-V(\pi)
    \]
    where $B^1(\pi)=BNE(\pi)$ and $B^i(\pi)=\argmax_{\bm{\sigma}\in B^{i-1}(\pi)}  V_{i-1}(\pi,\bm{\sigma})$ for $i=2,3,\dots$.

    Moreover, the {\it social value of history} is 
    \[
    \mathcal{V}(\pi)=(1-\delta)\cdot \sum_{i=1}^{\infty}\delta^{i-1}\mathcal{V}_{i}(\pi)
    \]
    where $\delta\in(0,1)$ is the discount factor.
\end{definition}
Our primal focus is the value and the social value of history, and thus we introduce the equivalent class of information structures inducing same values.
\begin{definition}
    We say $\pi$ and $\pi'$ are {\it equivalent information structure} if $V(\pi)=V(\pi')$ and $  \mathcal{V}_i(\pi)=\mathcal{V}_i(\pi') $
    for all $i$. Write $\pi \sim \pi'$ when $\pi$ and $\pi'$ are equivalent.
\end{definition}
When the information structure is fixed, there may be multiple equilibria, in which case each agent is assumed to choose the strategy that maximizes its own payoff with the highest social value of history.\footnote{The equilibrium selection rules do not matter for our main results.}

Under these definitions, the value of history for agent $i$ and the social value of history are zero when $\pi$ is either full information or no information.
Thus, when each agent receives the signal from the extreme information structures, there is no instrumental value in the history, and hence the value of history arises only under the intermediate information structures.

When is the value of history maximized?
What value of history may be possible under some information structure?
Irrespective of the determinant of the information structure, deriving the upper bound of the value of history may provide insights into the robust prediction or the information design for example.
Specifically, if the cost of recording the history exceeds the maximized value of history, then it's system cannot be justified by any information structure.
We discuss one potential application related with the information design in the markets for history in Section \ref{sec:markets_for_history}.

\begin{theorem}\label{thm:ternary_signal} 
Take any information structure $\pi$ and equilibrium $\bm{\sigma}$ under $\pi$.
    \begin{enumerate}
        \item There exists an information structure $\pi^{*}$ and equilibrium $\bm{\sigma}^{*}$ under $\pi^{*}$ such that (i) $\supp{\mu^{*}}=\{0,1/2,1\}$ and (ii) $\mathcal{V}_{i}(\pi^{*})\geq \mathcal{V}_{i}(\pi)$ for all $i$.
        \item  If the value of history for agent $i$ ($i\geq 2$) is maximized at $\pi$, then there exists $\pi'$ such that $\pi \sim \pi'$ and $\supp{\mu'}\subset\{0,\frac{1}{2},1\}$.
    \end{enumerate}
\end{theorem}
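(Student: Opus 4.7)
The plan is to prove Part 1 by reducing an arbitrary $\pi$ to a ternary structure supported on $\{0,1/2,1\}$ through a mean-preserving split of the posterior distribution, and then invoking the social-learning Blackwell-monotonicity result of \citet{sato2025value}; Part 2 will follow by a tightness argument at the maximum.

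For the construction, let $\mu$ be the posterior distribution induced by $\pi$ under the uniform prior. I define $\mu^*$ by splitting each interior posterior as follows: send $p\in[0,1/2]$ to $0$ with probability $1-2p$ and to $1/2$ with probability $2p$, and send $p\in[1/2,1]$ to $1/2$ with probability $2(1-p)$ and to $1$ with probability $2p-1$. Each split is mean-preserving, so $\mu^*$ is Bayes-plausible with $\supp{\mu^*}\subset\{0,1/2,1\}$, and $\mu^*$ is a mean-preserving spread of $\mu$; consequently the induced $\pi^*$ is Blackwell more informative than $\pi$. Moreover, the integrand $p\mapsto\max(p-1/2,0)$ defining $V(\pi)$ is identically zero on $[0,1/2]$ and linear on $[1/2,1]$; because the split stays within each of these two subintervals while preserving conditional means, $V(\pi^*)=V(\pi)$.

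Applying the social-learning Blackwell-monotonicity of \citet{sato2025value}, there is an equilibrium $\bm{\sigma}^*$ under $\pi^*$ with $V_i(\pi^*,\bm{\sigma}^*)\ge V_i(\pi,\bm{\sigma})$ for every $i$. Combining with $V(\pi^*)=V(\pi)$ yields $\mathcal{V}_i(\pi^*)\ge\mathcal{V}_i(\pi)$ for every $i$, which is Part 1. (The degenerate case $V(\pi)=0$ and the literal support-equality convention $\supp{\mu^*}=\{0,1/2,1\}$ can be repaired by a vanishing symmetric perturbation that adds $\varepsilon$-atoms at $0$ and $1$ and then taking $\varepsilon\downarrow 0$.) For Part 2, suppose $\pi$ globally maximizes $\mathcal{V}_i$ for some $i\ge 2$. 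The construction delivers a ternary $\pi^*$ with $V(\pi^*)=V(\pi)$ and $\mathcal{V}_i(\pi^*)\ge\mathcal{V}_i(\pi)$; maximality forces the inequality to be tight. A strict-monotonicity refinement of the Blackwell comparison turns any strict Blackwell improvement into a strict gain in $\mathcal{V}_i$, so the split must actually leave $\pi$ Blackwell-equivalent to $\pi^*$; hence $\pi\sim\pi^*$ and taking $\pi'=\pi^*$ completes Part 2.

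The main obstacle is the Blackwell-monotonicity step itself: dominance of the private signal does not automatically imply higher equilibrium payoffs in a sequential social-learning game, because the informational content of predecessors' actions is endogenously determined by their strategies under the new signal. The decisive ingredients are the binary-state and binary-action payoff structure (so that each agent's optimal action depends only on whether the posterior crosses $1/2$) together with the equilibrium selection that breaks ties in favor of the highest social value; these are exactly the ingredients packaged by the comparison result in \citet{sato2025value} that we invoke.
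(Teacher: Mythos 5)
Your construction is the right one --- the split you describe is exactly the paper's (interior posteriors in $(0,\tfrac12)$ split to $\{0,\tfrac12\}$, those in $(\tfrac12,1)$ to $\{\tfrac12,1\}$), and your observation that $V(\pi^*)=V(\pi)$ because the split preserves conditional means on each piece where $p\mapsto\max(p-\tfrac12,0)$ is linear matches the paper. But the step you yourself flag as ``the main obstacle'' is where the proposal has a genuine gap. There is no blanket ``social-learning Blackwell-monotonicity'' to invoke: a Blackwell improvement of the private signal does \emph{not} in general raise equilibrium payoffs of later agents, precisely because it changes what predecessors' actions reveal. The paper closes this gap with a sandwich argument specific to the ternary target: letting $\overline{V}_i(\pi)$ denote the payoff of an agent who directly observes $i$ independent draws from $\pi$, one has $V_i(\pi,\bm{\sigma})\le\overline{V}_i(\pi)$ for every equilibrium (Lemma 5 of \citet{sato2025value}), $\overline{V}_i(\pi)\le\overline{V}_i(\pi^*)$ by the mean-preserving-spread property plus Jensen applied to the convex $g(x)=\max(x-\tfrac12,0)$, and --- crucially --- $V_i(\pi^*,\bm{\sigma}^*)=\overline{V}_i(\pi^*)$ in \emph{every} equilibrium under the ternary structure (Lemma 6), because conclusive signals are perfectly transmitted through actions and the $\tfrac12$-signal leads to imitation. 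Your proposal replaces this chain with an appeal to a monotonicity result that is false in the generality you state it.

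Your Part 2 argument also goes wrong at the strictness step. It is not true that ``any strict Blackwell improvement gives a strict gain in $\mathcal{V}_i$'': if $\supp{\mu}\cap(0,1)\subset(0,\tfrac12]$ (or $\subset[\tfrac12,1)$), the split is a strictly Blackwell-more-informative structure yet yields exactly the same $V$ and $\mathcal{V}_i$ for all $i$ (the paper's Lemma~\ref{lem:equivalent}); such $\pi$ are maximizer candidates that are $\sim$-equivalent to a ternary structure without being Blackwell-equivalent to it. The correct dichotomy, which the paper establishes via Lemma~\ref{lem:split}, is whether $\supp{\mu}$ meets \emph{both} $(0,\tfrac12)$ and $(\tfrac12,1)$: if so the split strictly improves $\overline{V}_i$ for all $i\ge2$ and $\pi$ cannot be a maximizer; if not, Lemma~\ref{lem:equivalent} delivers the required equivalent ternary $\pi'$. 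So the conclusion of your Part 2 is right but the mechanism (``Blackwell-equivalence forced by tightness'') is not the one that operates here and would not survive the one-sided counterexamples.
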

Intuitively, the proof goes as follows:
First, as the prior and the cutoff of decision problem are coincide with $1/2$, splitting the interior private beliefs into $1/2$ and $1$ or $0$ and $1/2$ does not change the expected payoff from only observing private signals.
Then, utilizing the results of \citet{sato2025value}, we show that it weakly increases the expected payoff for all agents $i\geq 2$. In particular, it can be shown that the payoff strictly increases in most cases. This is because, when the original information structure includes both a signal that makes the state likely to be 
$H$ (but is not a conclusive signal) and a signal that makes the state likely to be 
$L$ (again, not conclusive), observing both actions that indicate these signals results in a strictly lower expected payoff compared to the case where agents receive signals from the split information structure. See Lemma 1 for details.
Conversely, when such signals do not exist, it can be shown that the expected payoffs for all agents are equal between the two information structures before and after the split. That is, the two structures are equivalent (Lemma 2). Therefore, we conclude that the only signals that can maximize the value of history are ternary signals or those equivalent to ternary signals.

Theorem \ref{thm:ternary_signal} indicates that the optimal information structure maximizing the value of history for agent $i$ can be found within the class of information structures satisfying $\supp{\mu}=\{0,1/2,1\}$.
Let $\varepsilon$ be the probability of disclosing the uninformative signal, that is, $\varepsilon=\pi(\mu=1/2|L)=\pi(\mu=1/2|H)$.
Denote $\pi(\varepsilon)$ be the information structure such that $\supp{\mu}=\{0,1/2,1\}$ and $\varepsilon=\pi(\mu=1/2|L)=\pi(\mu=1/2|H)$.
Then, we can derive the next proposition.
\begin{proposition}\label{prop_value_of_history_maximized}
    The value of history for agent $i$ is maximized at $\varepsilon_{i}^{*}=(1/i)^{i-1}$.
    Moreover, the value of history is maximized at $\varepsilon_{S}^{*}=(1-\sqrt{1-\delta})/\delta$, which is increasing in $\delta$.
\end{proposition}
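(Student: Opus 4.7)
\medskip
\noindent\textbf{Proof plan.}
By Theorem \ref{thm:ternary_signal}, it suffices to search within the one-parameter family $\pi(\varepsilon)$, so the plan is to derive closed-form expressions for $\mathcal{V}_i(\pi(\varepsilon))$ and $\mathcal{V}(\pi(\varepsilon))$ and then optimize over $\varepsilon\in[0,1]$. First, I would pin down equilibrium behavior: since each private belief lies in $\{0,1/2,1\}$, an agent with $\mu\in\{0,1\}$ knows the state and matches the action to it, whereas an agent with $\mu=1/2$ must rely on history. Under the selected equilibrium --- in which uninformed agents use a fixed tie-breaking rule --- any action taken by an informative predecessor reveals the state, so the history of agent $i$ fails to reveal the state exactly when all $i-1$ predecessors drew the uninformative signal (probability $\varepsilon^{i-1}$). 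Combining with agent $i$'s own signal, agent $i$ remains uninformed with probability $\varepsilon^i$ and otherwise plays the state-matching action.

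Since a fully informed agent earns ex-ante payoff $1/4$ and an uninformed agent earns $0$, this yields $V(\pi(\varepsilon))=(1-\varepsilon)/4$ and $V_i(\pi(\varepsilon),\bm\sigma^*)=(1-\varepsilon^i)/4$, hence $\mathcal{V}_i(\pi(\varepsilon))=(\varepsilon-\varepsilon^i)/4$. I would then solve the first-order condition $1-i\varepsilon^{i-1}=0$ to obtain the claimed $\varepsilon_i^{*}=i^{-1/(i-1)}$, verifying maximality via the second-order check (the function $\varepsilon\mapsto\varepsilon-\varepsilon^i$ is concave on $[0,1]$).

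For the social value, I would substitute into the definition and split the resulting series into two geometric sums:
\[
\mathcal{V}(\pi(\varepsilon))=\frac{1-\delta}{4}\left[\frac{\varepsilon}{1-\delta}-\frac{\varepsilon}{1-\delta\varepsilon}\right]=\frac{\delta\varepsilon(1-\varepsilon)}{4(1-\delta\varepsilon)}.
\]
The first-order condition in $\varepsilon$ then reduces to the quadratic $\delta\varepsilon^2-2\varepsilon+1=0$, whose unique root in $[0,1]$ is $\varepsilon_S^{*}=(1-\sqrt{1-\delta})/\delta$. For the monotonicity claim I would use the substitution $u=\sqrt{1-\delta}$: rewriting $\varepsilon_S^{*}=(1-u)/(1-u^2)=1/(1+u)$ and noting that $u$ decreases in $\delta$ shows that $\varepsilon_S^{*}$ is increasing in $\delta$.

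The step requiring the most care is the equilibrium specification: I would need to verify that, in the equilibrium selected by the paper, the action of an informative predecessor is always distinguishable from the default chosen after an uninformative signal, so that history is either fully revealing or completely uninformative, with no intermediate inference arising. Once that dichotomy is in hand, the probability bookkeeping (all predecessors uninformed collapses to $\varepsilon^{i-1}$) and the rest of the argument --- geometric-series summation and single-variable optimization --- is routine.
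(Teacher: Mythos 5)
Your proposal is correct and follows essentially the same route as the paper: reduce to the one-parameter ternary family via Theorem \ref{thm:ternary_signal}, compute $\mathcal{V}_i(\pi(\varepsilon))=\tfrac14(\varepsilon-\varepsilon^i)$ and $\mathcal{V}(\pi(\varepsilon))=\tfrac{\delta\varepsilon(1-\varepsilon)}{4(1-\delta\varepsilon)}$, and optimize (your $u=\sqrt{1-\delta}$ substitution for monotonicity is in fact cleaner than the paper's). Two remarks. First, your first-order condition $1-i\varepsilon^{i-1}=0$ gives $\varepsilon_i^*=(1/i)^{1/(i-1)}$, which is what you wrote; the proposition's stated $(1/i)^{i-1}$ coincides with this only at $i=2$ and appears to be a typo in the paper, so your value is the correct one. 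Second, the dichotomy you flag as delicate --- ``history is either fully revealing or completely uninformative'' --- is not literally true at the level of beliefs: under a deterministic tie-breaking rule the posterior after an all-default history is $\varepsilon^{i-1}/(1+\varepsilon^{i-1})$, strictly between $0$ and $\tfrac12$. The payoff formula survives anyway because that intermediate belief still induces the action that is optimal in the state consistent with it, and an agent who ends up mismatching earns $0$ in expectation either way; the paper sidesteps this by exhibiting the explicit equilibrium ``play $1$ iff your signal reveals $H$ or some predecessor played $1$'' (Lemma \ref{lem:equivalent}), which attains $\overline{V}_i(\pi)=\tfrac14\bigl[1-(1-\pi(\mu=1|H))^i\bigr]$ directly.
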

\begin{proof}
Under $\pi(p)$, the expected payoff of agent $i$ is given by
\begin{align*}
    \mathcal{V}_{i}(\pi(\varepsilon))&=\frac{1}{4}\cdot (1-\varepsilon^{i})  - \frac{1}{4}\cdot (1-\varepsilon) \\
    & = \frac{1}{4}\cdot[\varepsilon-\varepsilon^{i}].
\end{align*}
Also, 
\begin{align*}
    \mathcal{V}(\pi(\varepsilon))=\frac{1}{4}\cdot (1-\delta)\sum_{i=1}^{\infty}\delta^{i-1}(\varepsilon-\varepsilon^{i}).
\end{align*}
Both the value of history for agent $i$ and society are concave in $\varepsilon$.\footnote{
By a simple calculation, we have
\begin{align*}
    \sum_{i=1}^{\infty}\delta^{i-1}(\varepsilon-\varepsilon^{i}) 
    & = \varepsilon\cdot \sum_{i=1}^{\infty}\delta^{i-1}(1-\varepsilon^{i-1}) \\
    & = \varepsilon\cdot \left[\frac{1}{1-\delta}-\frac{1}{1-\delta \varepsilon}\right] \\
    & = \frac{\delta \varepsilon(1-\varepsilon)}{(1-\delta)(1-\delta \varepsilon)}.
\end{align*}
Let $f(\varepsilon)= \frac{\varepsilon(1-\varepsilon)}{1-\delta \varepsilon}$.
Then, $f'(\varepsilon)=\frac{1}{(1-\delta \varepsilon)^{2}}\cdot [(1-2\varepsilon+\delta \varepsilon^{2}]$ and $f''(\varepsilon)=\frac{1}{(1-\delta \varepsilon)^{3}}\cdot (-2) \cdot (1-\delta-2\delta \varepsilon)$.
Thus, $f''(\varepsilon)\leq 0$ for all $\varepsilon\in (0,1)$, and thus $f$ is concave in $\varepsilon\in (0,1)$.
Moreover, as $f'(0)=1$ and $f'(1)=\frac{-1+\delta}{(1-\delta)^{2}}<0$, the interior solution is guaranteed.
}
Thus, there is a unique maximizer $\varepsilon_{i}^{*}$ and $\varepsilon_{S}^{*}$, respectively.
The simple calculation shows that $\varepsilon_{i}^{*}=(1/i)^{i-1}$.
Moreover, $\varepsilon_{S}^{*}$ is the solution of $1-2\varepsilon+\delta \varepsilon^{2}=0$ satisfying $\varepsilon_{S}^{*}\in(0,1)$.
Thus, we obtain $\varepsilon_{S}^{*}=(1-\sqrt{1-\delta})/\delta$.
Increasing $\delta$ implies higher $\varepsilon_{S}^{*}$, and thus noisier information structure. 
\end{proof}

\begin{corollary}
    The maximized value of history is weakly increasing in $\delta$. 
    Moreover, the maximized social value of history is $\frac{1}{4}\cdot \frac{(1-\sqrt{1-\delta})^{2}}{\delta}$, which approaches to $\frac{1}{4}$ as $\delta\to 1$.
\end{corollary}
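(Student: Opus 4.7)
My plan is to obtain both claims of the corollary by direct substitution of the maximizer $\varepsilon_S^{*}=(1-\sqrt{1-\delta})/\delta$ from Proposition \ref{prop_value_of_history_maximized} into the closed-form expression $\mathcal{V}(\pi(\varepsilon))=\tfrac{1}{4}\cdot\tfrac{\delta\varepsilon(1-\varepsilon)}{1-\delta\varepsilon}$ derived in the footnote of that proof. The algebraic core consists of two observations: the identity $1-\delta\varepsilon_S^{*}=\sqrt{1-\delta}$, which cancels the denominator cleanly, and the factorization $\delta-1+\sqrt{1-\delta}=\sqrt{1-\delta}\,(1-\sqrt{1-\delta})$, which simplifies $1-\varepsilon_S^{*}$. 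Combined, these immediately yield the claimed closed form $\tfrac{1}{4}\cdot\tfrac{(1-\sqrt{1-\delta})^{2}}{\delta}$.

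For the monotonicity claim and the limit, rather than differentiating the rational expression in $\delta$ directly, I would substitute $y=\sqrt{1-\delta}$, so that $\delta=1-y^{2}$ and $y\in(0,1)$ is strictly decreasing in $\delta$. Under this change of variables the maximized social value simplifies further to $\tfrac{1}{4}\cdot\tfrac{(1-y)^{2}}{1-y^{2}}=\tfrac{1}{4}\cdot\tfrac{1-y}{1+y}$, which is manifestly decreasing in $y$; composing with the decreasing map $\delta\mapsto y$ delivers strict, and hence weak, monotonicity of the maximized social value in $\delta$ on $(0,1)$. The same reduced form makes the limit transparent: as $\delta\to 1$, $y\to 0$ and $\tfrac{1-y}{1+y}\to 1$, so the maximized social value tends to $\tfrac{1}{4}$.

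I do not anticipate a serious obstacle, since every step is elementary once Proposition \ref{prop_value_of_history_maximized} is invoked. The only point of care is the factorization $\delta-1+\sqrt{1-\delta}=\sqrt{1-\delta}\,(1-\sqrt{1-\delta})$, which is the step most prone to algebraic slips; isolating it through the substitution $y=\sqrt{1-\delta}$ from the outset would, if anything, make the entire argument cleaner and render both monotonicity and the limit visible on inspection.
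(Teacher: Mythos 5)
Your proposal is correct and is essentially the argument the paper intends: the corollary is left unproved in the text and follows by substituting $\varepsilon_S^{*}=(1-\sqrt{1-\delta})/\delta$ into the closed form $\mathcal{V}(\pi(\varepsilon))=\tfrac{1}{4}\cdot\tfrac{\delta\varepsilon(1-\varepsilon)}{1-\delta\varepsilon}$ from the proof of Proposition \ref{prop_value_of_history_maximized}, exactly as you do. Your substitution $y=\sqrt{1-\delta}$, giving the reduced form $\tfrac{1}{4}\cdot\tfrac{1-y}{1+y}$, is a clean way to organize the same computation and makes both the monotonicity and the limit immediate.
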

Thus, the possible range of the values of history is larger when the society puts higher weight on the future.
When the platform or policy maker evaluates the system accumulating history, we can detect that higher costs than $\frac{1}{4}\cdot \frac{(1-\sqrt{1-\delta})^{2}}{\delta}$ indicates that the return is negative from that system without knowing the underlying information structures.

\section{Application: Markets for History} \label{sec:markets_for_history}
\subsection{Set-Up}
Throughout this section, the model consists of a risk-neutral seller and countably infinite buyers indexed by $i=1,2,\dots$. 
The basic model is same, but buyers can observe the entire history only when they pay the price of history for the seller.
The price of history is determined by the monopoly data seller.

The game proceeds as follows. 
At the beginning of each period $T=1,2,\cdots$, the seller posts a price for histories. Then buyer $i=T$ chooses whether to purchase a history of actions taken by previously moved buyers. 
Buyer $i$ later observes a private signal, which is drawn independently from the identical information structure $\pi:\Omega\rightarrow\Delta(S)$. 
Then, at the end of the period, the buyer takes action. 

Dynamic pricing is assumed here. That is, the seller can offer different prices to the buyer each period. Also, it is assumed that when a buyer purchases history, they can only observe the actions of agents who have previously bought history. That is, if there was a buyer in the past who did not buy the history, he cannot observe the actions of that buyer.
Additionally, each buyer can observe the accumulated amount of history, i.e., the number of agents who have purchased in the past, prior to making a purchase.

This model considers the following situations. 
There are a platform and agents, and each agent decides whether to make an investment. 
The platform charges the agent a registration fee, and the agent can receive a private signal without registering with the platform and decide whether or not to invest based on that alone.
However, if she registers with the platform, she can additionally observe the behavior of people who have registered with the platform in the past.

For a given price $p$, buyer $i$ purchases a history if $p\leq \mathcal{V}_{N(i)+1}(\pi)$, where $N(i)$ is the number of agents who purchased histories from $1$ to $i-1$.
Thus, the value of history determines their willingness to pay.
Note that in equilibrium, $N(i)=i-1$ always holds because the seller should let all buyers purchase history.

As the seller's revenue increases with price, in equilibrium, the seller offers a price $p_i(\pi)=\mathcal{V}_i(\pi)$ to the buyer $i$ and all buyers will buy history. Hence, the seller's surplus $\mathcal{V}_S(\pi)$ is calculated by
\begin{align}
    \mathcal{V}_{S}(\pi)
    = (1-\delta)\cdot \sum_{i=1}^{\infty}\delta^{i-1}p_i(\pi)
    = \mathcal{V}(\pi).
\end{align}
Let $\mathcal{V}_{B}(\pi)$ be the discounted average of the buyer's surplus. Then, 
\begin{align}
    \mathcal{V}_{B}(\pi) = (1-\delta)\cdot \sum_{i=1}^{\infty} \delta^{i-1}V(\pi)=V(\pi).
\end{align} 
Define the social surplus $\mathcal{V}^*(\pi)$ as a weighted average of the seller's surplus and the buyer's surplus. Formally, 
\begin{align*}
    \mathcal{V}^*(\pi)=\alpha \mathcal{V}_B(\pi)+(1-\alpha)\mathcal{V}_S(\pi)
\end{align*}
for some $\alpha \in (0,1)$.
\subsection{Optimal Information Structures}
    Using Theorem \ref{thm:ternary_signal}, we can characterize the information structure that maximizes the seller's surplus, the buyer's surplus, and the social surplus, respectively.
    \begin{proposition} \label{prop: buyer_seller_socially_optimal_info}
$\mathcal{V}_S(\pi)$, $\mathcal{V}_B(\pi)$, and $\mathcal{V}^*(\pi)$ are maximized when $\pi$ is equivalent to the ternary signal $\pi'$ with $\supp{\mu'}\subset\{0,\frac{1}{2},1\}$. 
Let $\varepsilon^*_S$, $\varepsilon^*_B$, and $\varepsilon^*$ be the values of the probability of receiving an uninformative signal that maximize $\mathcal{V}_S(\pi)$, $\mathcal{V}_B(\pi)$, and $\mathcal{V}^*(\pi)$, respectively. 
Then, $\varepsilon^*_S=\frac{1-\sqrt{1-\delta}}{\delta}$, $\varepsilon^*_B=0$, and $\varepsilon^*=0$ if $\delta\leq \frac{\alpha}{1-\alpha}$ and $\varepsilon^*=\frac{1-\sqrt{\frac{(1-\alpha)(1-\delta)}{1-2\alpha}}}{\delta}$ if $\delta\geq \frac{\alpha}{1-\alpha}$.
\end{proposition}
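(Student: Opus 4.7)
My plan is to leverage Theorem \ref{thm:ternary_signal} to restrict attention to the one-parameter family $\{\pi(\varepsilon)\}_{\varepsilon\in[0,1]}$ of ternary signals, and then to optimize each of the three surpluses as a scalar function of $\varepsilon$. The reduction is straightforward in each case: $\mathcal{V}_S(\pi)=\mathcal{V}(\pi)$ is a discounted sum of the $\mathcal{V}_i(\pi)$, so Theorem \ref{thm:ternary_signal} directly yields a ternary $\pi^*$ dominating any $\pi$; $\mathcal{V}_B(\pi)=V(\pi)$ is maximized by the fully informative signal $\pi(0)$ by Blackwell's theorem; and for $\mathcal{V}^*$ I would invoke the observation (noted right after Theorem \ref{thm:ternary_signal}) that the splitting of interior private beliefs onto $\{0,1/2,1\}$ preserves $V(\pi)$ while weakly raising every $\mathcal{V}_i(\pi)$, so $\mathcal{V}^*(\pi^*)\ge \mathcal{V}^*(\pi)$ as well, and the maximum of $\mathcal{V}^*$ is likewise attained on the ternary family.

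Next I would collect the closed forms on $\pi(\varepsilon)$: a direct computation with uniform prior and cutoff $1/2$ gives $V(\pi(\varepsilon))=(1-\varepsilon)/4$, and the proof of Proposition \ref{prop_value_of_history_maximized} supplies $\mathcal{V}(\pi(\varepsilon))=\delta\varepsilon(1-\varepsilon)/[4(1-\delta\varepsilon)]$. The seller's optimum $\varepsilon_S^*=(1-\sqrt{1-\delta})/\delta$ is then immediate from Proposition \ref{prop_value_of_history_maximized}, and the buyer's optimum $\varepsilon_B^*=0$ follows because $V(\pi(\varepsilon))$ is strictly decreasing in $\varepsilon$. A convenient regrouping of the social surplus yields
\[
\mathcal{V}^*(\pi(\varepsilon))=\frac{(1-\varepsilon)\bigl[\alpha+(1-2\alpha)\delta\varepsilon\bigr]}{4(1-\delta\varepsilon)},
\]
and I would take the logarithmic derivative, clear the common denominator, and collect in $\varepsilon$ to reach the quadratic $(1-2\alpha)\delta^{2}\varepsilon^{2}-2(1-2\alpha)\delta\varepsilon+(1-2\alpha)\delta-\alpha(1-\delta)=0$. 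The substitution $x=1-\delta\varepsilon$ rewrites this as $(1-2\alpha)x^{2}=(1-\alpha)(1-\delta)$, whose smaller positive root gives the interior candidate $\varepsilon_{-}=(1-\sqrt{(1-\alpha)(1-\delta)/(1-2\alpha)})/\delta$.

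Finally I would carry out the case split. Evaluating the derivative of $\mathcal{V}^*(\pi(\varepsilon))$ at $\varepsilon=0$ shows it has the sign of $\delta(1-\alpha)-\alpha$, and a parallel manipulation gives $\varepsilon_{-}\ge 0$ iff $\delta\ge\alpha/(1-\alpha)$; hence for $\delta\le\alpha/(1-\alpha)$ the objective is decreasing on $[0,1]$ and the maximum is at the boundary $\varepsilon^*=0$, while for $\delta>\alpha/(1-\alpha)$ the unique interior critical point $\varepsilon_{-}$ is the maximizer. The main obstacle I anticipate is the bookkeeping in the FOC together with ruling out the larger root $\varepsilon_{+}=(1+\sqrt{(1-\alpha)(1-\delta)/(1-2\alpha)})/\delta$, which exceeds $1$ and is hence infeasible, and handling the degenerate case $\alpha\ge 1/2$: for $\alpha>1/2$ the radicand is negative so the quadratic has no real roots in $[0,1]$, and for $\alpha=1/2$ the closed form is ill-defined but the objective reduces to a strictly decreasing function of $\varepsilon$, so in both subcases $\varepsilon^*=0$ follows directly.
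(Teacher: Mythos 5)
Your proposal is correct and follows essentially the same route as the paper's proof: reduce to the ternary family $\pi(\varepsilon)$ via Theorem \ref{thm:ternary_signal} (using that the split preserves $V$ and weakly raises each $\mathcal{V}_i$), read off $\varepsilon_B^*=0$ and $\varepsilon_S^*$ from Proposition \ref{prop_value_of_history_maximized}, and solve the concave scalar first-order condition for $\mathcal{V}^*$, whose quadratic $(1-2\alpha)(1-\delta\varepsilon)^2=(1-\alpha)(1-\delta)$ matches the paper's. Your extra care with the infeasible root and the degenerate cases $\alpha\geq 1/2$ is a minor refinement of bookkeeping the paper leaves implicit, not a different argument.
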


\begin{proof}
    Since $\mathcal{V}_B(\pi)=V(\pi)$ and $V(\pi)$ is maximized at full information, $\mathcal{V}_B(\pi)=V(\pi)$ is maximized when $\supp{\mu}=\{0,1\}$, or $\varepsilon^*_B=0$.
    
    Note that $\mathcal{V}_S(\pi)=\mathcal{V}(\pi)$. 
    Thus, Theorem~\ref{thm:ternary_signal} yields that $\mathcal{V}_S(\pi)$ is maximized when $\pi$ is equivalent to the ternary signal $\pi'$ with $\supp{\mu'}\subset\{0,\frac{1}{2},1\}$. 
    Also, by Proposition~\ref{prop_value_of_history_maximized}, we have $\varepsilon_S^*=\frac{1-\sqrt{1-\delta}}{\delta}$.
    
    Consider $ \mathcal{V}^*(\pi)$. Take any $\pi$ and consider $\pi^*$, which is the information structure derived from the split mentioned in Lemma \ref{lem:split}. 
    Then, we have $V(\pi)=V(\pi^*)$ and $\mathcal{V}(\pi^*)\geq \mathcal{V}(\pi)$ by Theorem~\ref{thm:ternary_signal}. 
    Also, $\mathcal{V}(\pi^*)= \mathcal{V}(\pi)$ holds if and only if $\pi\sim \pi^*$. 
    Since $\mathcal{V}^*(\pi)=\alpha \mathcal{V}_B(\pi)+(1-\alpha)\mathcal{V}_S(\pi)=\alpha V(\pi)+(1-\alpha)\mathcal{V}(\pi)$ holds, $\mathcal{V}^*(\pi)$ is maximized when there exists $\pi^*$ such that $\pi\sim \pi^*$ and $\supp{\mu^*}\subset \{0,\frac{1}{2},1\}$. 
    Now, it follows that 
    \begin{align*}
        \mathcal{V}(\pi(\varepsilon))=&\frac{1}{4}(1-\delta)\sum_{i=1}^\infty \delta^{i-1}(\varepsilon-\varepsilon^i)\\
       =&\frac{\delta \varepsilon(1-\varepsilon)}{4(1-\delta \varepsilon)}\\
        V(\pi(\varepsilon))=&\frac{1-\varepsilon}{4}.
    \end{align*}
    Hence, 
    \begin{align*}
        4\mathcal{V}^*(\pi(\varepsilon))&=\alpha(1-\varepsilon)+(1-\alpha)\frac{\delta \varepsilon(1-\varepsilon)}{1-\delta \varepsilon}\\
       &= \alpha-\alpha\varepsilon+(1-\alpha)\frac{(1-\delta\varepsilon)\varepsilon+\delta\varepsilon-\varepsilon}{1-\delta \varepsilon}\\
        &=\alpha+(1-2\alpha)\varepsilon+\frac{(1-\alpha)\varepsilon(\delta-1)}{1-\delta\varepsilon}.
    \end{align*}
    It follows that
     \begin{align*}
        4\frac{\partial}{\partial\varepsilon}\mathcal{V}^*(\pi(\varepsilon))&=1-2\alpha+\frac{(\delta-1)(1-\delta\varepsilon)+\delta\varepsilon(\delta-1)}{(1-\delta\varepsilon)^2}\\
        &=1-2\alpha-\frac{(1-\alpha)(1-\delta)}{(1-\delta\varepsilon)^2}\\
        4\frac{\partial^2}{\partial\varepsilon^2}\mathcal{V}^*(\pi(\varepsilon))&=-2\delta(1-\alpha)(1-\delta)(1-\delta\varepsilon)^{-3}<0.
    \end{align*}
    Thus, $\frac{\partial}{\partial\varepsilon}\mathcal{V}^*(\pi(\varepsilon))$ is strictly decreasing with respect to $\varepsilon$. 
    Note that $\frac{\partial}{\partial\varepsilon}\mathcal{V}^*(\pi(1))=1-2\alpha-\frac{1-\alpha}{1-\delta}=\frac{2\alpha\delta-\delta-\alpha}{1-\delta}< 0$ always holds and $\frac{\partial}{\partial\varepsilon}\mathcal{V}^*(\pi(0))=\delta-\alpha-\alpha\delta\leq 0$ if and only if $\delta \leq\frac{\alpha}{1-\alpha}$. 
    Therefore, $\mathcal{V}^*(\pi(\varepsilon))$ is maximized at $\varepsilon=0$ if $\delta \leq\frac{\alpha}{1-\alpha}$ and at $\varepsilon=\varepsilon^*$ if $\delta \geq\frac{\alpha}{1-\alpha}$, where $\varepsilon^*$ satisfies $\frac{\partial}{\partial\varepsilon}\mathcal{V}^*(\pi(\varepsilon^*))=0$, i.e., $\varepsilon^*=\frac{1}{\delta}(1-\sqrt{\frac{(1-\alpha)(1-\delta)}{1-2\alpha)}})$.
 \end{proof}
For buyers, full information is optimal because they can obtain the maximum value without having to purchase a history if they know the state only from their private signals.

However, full information is not a desirable structure for the seller. 
The seller can sell the history only at a price no higher than the value of history, but as we have seen, under full information, the value of history becomes zero. 
As a result, the seller's surplus also becomes zero. 
The optimal information structure for the seller is a mixture signal combining full information and no information, where the probability of receiving an uninformative signal in each state is $\frac{1 - \sqrt{1 - \delta}}{\delta}$. 
This probability increases monotonically with $\delta$ and converges to 1 as $\delta \to 1$. 
Therefore, if the seller is sufficiently patient, a signal that is nearly completely uninformative becomes optimal.

If $\alpha$ is less than $\frac{1}{2}$ and $\delta$ is sufficiently large, then the socially optimal signal also becomes a mixture of full information and no information.
\subsection{Sticky Prices Model} \label{sec: sticky price}
Here, we consider the case in which the seller cannot change the price in every period. 
In particular, we examine the optimal information structure when the seller can adjust the price only at every $ t$-period interval. 
In other words, the seller can set the price only at periods $1, t+1, 2t+1, 3t+1,\dots$. 
All other settings are the same as before. 
Note that in equilibrium, the seller offers the history at price $p = \mathcal{V}_{kt+1}(\pi)$ to buyers $kt+1$ to $(k+1)t$. 
This is because $\mathcal{V}_i(\pi)$ is monotonically increasing in $i$, and the history of buyers who do not purchase the history is not accumulated.

In this case, the seller's surplus $\mathcal{V}_S(\pi|t)$ is calculated by
\begin{align*}
    \mathcal{V}_{S}(\pi|t)
    = (1-\delta)\cdot \sum_{i=1}^{\infty}\delta^{i-1}p_i(\pi|t)
    =(1-\delta^t) \sum_{k=0}^\infty \delta^{kt}\mathcal{V}_{kt+1}(\pi).
\end{align*}
Similarly, the buyer's surplus  $\mathcal{V}_B(\pi|t)$ and social surplus  $\mathcal{V}^*(\pi|t)$ are
\begin{align*}
   & \mathcal{V}_{B}(\pi|t) = (1-\delta)\cdot \sum_{i=1}^{\infty} \delta^{i-1}[V(\pi)+\mathcal{V}_i(\pi)-p_i(\pi|t)].\\
   & \mathcal{V}^*(\pi|t)=\alpha \mathcal{V}_B(\pi|t)+(1-\alpha)\mathcal{V}_S(\pi|t)
\end{align*}
for some $\alpha \in (0,1)$.

Note that $V(\pi)+\mathcal{V}_i(\pi)-p_i(\pi|t)\leq \max_{\bm{\sigma}\in B^i(\pi)} V_i(\pi,\bm{\sigma}) \leq \frac{1}{4}$, where the last inequality holds strictly if $\pi$ is not full information. 
In addition, it can be calculated $V(\pi)+\mathcal{V}_i(\pi)-p_i(\pi|t)=\frac{1}{4}$ when $\pi$ is full information. 
Thus, the information structure that maximizes buyer surplus in this case is still full information.
Furthermore, since the seller's surplus is monotonically increasing with respect to $\mathcal{V}_i(\pi)$ (for $i = 1, t+1, 2t+1, \dots$), it follows from Theorem~\ref{thm:ternary_signal} that the information structure which maximizes both the seller's surplus and the social surplus in this case is ternary information structure. 
Formally, we have the following proposition, which is the analog of Proposition~\ref{prop: buyer_seller_socially_optimal_info}.

  \begin{proposition} \label{prop:sticky_price}
$\mathcal{V}_S(\pi|t)$, $\mathcal{V}_B(\pi|t)$, and $\mathcal{V}^*(\pi|t)$ are maximized when $\pi$ is equivalent to the ternary signal $\pi'$ with $\supp{\mu'}\subset\{0,\frac{1}{2},1\}$. 
Let $\varepsilon^*_S$, $\varepsilon^*_B$, and $\varepsilon^*$ be the values of the probability of receiving an uninformative signal that maximize $\mathcal{V}_S(\pi|t)$, $\mathcal{V}_B(\pi|t)$, and $\mathcal{V}^*(\pi|t)$, respectively. 
Then, $\varepsilon^*_S=(\frac{t+1-(t-1)\delta^t-\sqrt{[t+1-(t-1)\delta^t]^2-4\delta^t}}{2\delta^t})^\frac{1}{t}$, $\varepsilon^*_B=0$, and $\varepsilon^*=0$ if $\alpha\geq \frac{1}{2}$. 
\end{proposition}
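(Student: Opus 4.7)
The plan is to follow the template of the proof of Proposition~\ref{prop: buyer_seller_socially_optimal_info}: use Theorem~\ref{thm:ternary_signal} to reduce each of the three optimization problems to the one-parameter family $\pi(\varepsilon)$, then optimize in $\varepsilon$. The buyer part is essentially contained in the paragraph preceding the proposition, since $V(\pi)+\mathcal{V}_i(\pi)-p_i(\pi|t)\le \tfrac{1}{4}$ with equality iff $\pi$ is full information. For the seller surplus, $\mathcal{V}_S(\pi|t)$ is a nonnegative $\delta^t$-weighted sum of the $\mathcal{V}_{kt+1}(\pi)$'s, each weakly increased by the split of Lemma~\ref{lem:split} without altering $V(\pi)$. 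For the social surplus, note the accounting identity $\mathcal{V}_B(\pi|t)+\mathcal{V}_S(\pi|t)=V(\pi)+\mathcal{V}(\pi)$, obtained by observing that the prices $p_i(\pi|t)$ appear with opposite signs in the two surpluses and telescope away. This gives
\begin{equation*}
    \mathcal{V}^*(\pi|t) = \alpha\bigl[V(\pi)+\mathcal{V}(\pi)\bigr] + (1-2\alpha)\,\mathcal{V}_S(\pi|t).
\end{equation*}
When $\alpha\le 1/2$, both terms weakly improve under the split (since $V$ is preserved, $\mathcal{V}$ and $\mathcal{V}_S$ weakly increase), so a ternary optimum exists; when $\alpha>1/2$, full information is itself ternary and will be shown optimal below.

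To obtain the explicit value of $\varepsilon^*_S$, substitute $\mathcal{V}_{kt+1}(\pi(\varepsilon))=\tfrac{1}{4}(\varepsilon-\varepsilon^{kt+1})$ into $\mathcal{V}_S(\pi(\varepsilon)|t)$ and sum the two geometric series in $\delta^t$ and $\delta^t\varepsilon^t$, yielding
\begin{equation*}
    \mathcal{V}_S(\pi(\varepsilon)|t) = \frac{\delta^t\,\varepsilon(1-\varepsilon^t)}{4\,(1-\delta^t\varepsilon^t)}.
\end{equation*}
The first-order condition in $\varepsilon$, after the substitution $x=\varepsilon^t$ and clearing denominators, reduces to the quadratic
\begin{equation*}
    \delta^t x^2 - \bigl[(t+1)-(t-1)\delta^t\bigr]x + 1 = 0.
\end{equation*}
Evaluating at $x=0$ and $x=1$ gives values $1$ and $t(\delta^t-1)<0$, so exactly one root lies in $(0,1)$; this smaller root is the $x$-value of the maximizer, and raising it to the $1/t$-th power gives the stated $\varepsilon^*_S$. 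The case $t=1$ recovers Proposition~\ref{prop_value_of_history_maximized}.

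Finally, for the claim $\varepsilon^*=0$ when $\alpha\ge 1/2$, use the rewriting of $\mathcal{V}^*(\pi|t)$ displayed above together with the bound $V(\pi)+\mathcal{V}_i(\pi)=\max_{\bm{\sigma}\in B^i(\pi)}V_i(\pi,\bm{\sigma}) \le \tfrac{1}{4}$ (the highest ex-ante payoff achievable against the binary state), which implies $V(\pi)+\mathcal{V}(\pi)\le \tfrac{1}{4}$ with equality only at full information. Since $\mathcal{V}_S(\pi|t)\ge 0$ with equality at full information, and the coefficient $1-2\alpha$ is non-positive for $\alpha\ge 1/2$, both terms are jointly maximized by full information, so $\varepsilon^*=0$.

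The main obstacle is the algebraic derivation of the quadratic in $x$: one must carefully differentiate the quotient $\varepsilon(1-\varepsilon^t)/(1-\delta^t\varepsilon^t)$ and identify which root lies in $(0,1)$ and corresponds to a maximum rather than a boundary. The accounting identity underlying the $\alpha\ge 1/2$ argument is clean, and is the main new ingredient relative to the proof of Proposition~\ref{prop: buyer_seller_socially_optimal_info}.
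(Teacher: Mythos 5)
Your proposal is correct and follows essentially the same route as the paper: the buyer bound via $V(\pi)+\mathcal{V}_i(\pi)-p_i(\pi|t)\le\frac{1}{4}$, the reduction of $\mathcal{V}_S(\pi|t)$ to $\frac{\delta^t}{4}\cdot\frac{\varepsilon(1-\varepsilon^t)}{1-\delta^t\varepsilon^t}$ with the same quadratic in $x=\varepsilon^t$, and the same case split on $\alpha$ for $\mathcal{V}^*(\pi|t)$. Your ``accounting identity'' $\mathcal{V}^*(\pi|t)=\alpha[V(\pi)+\mathcal{V}(\pi)]+(1-2\alpha)\mathcal{V}_S(\pi|t)$ is just a cleaner statement of the regrouping the paper performs explicitly, so there is no substantive difference.
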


\begin{proof}
First, 
\begin{align*}
\mathcal{V}_B(\pi|t)&=(1-\delta)\sum_{i=1}^\infty \delta^{i-1} [V(\pi)+\mathcal{V}_i(\pi)-p_i(\pi|t)]\\
&=(1-\delta)\sum_{i=1}^\infty \delta^{i-1} \left[ \max_{\bm{\sigma}\in B^i(\pi)} V_i(\pi,\bm{\sigma})-p_i(\pi|t)\right]\\
&\leq (1-\delta)\sum_{i=1}^\infty \delta^{i-1} \max_{\bm{\sigma}\in B^i(\pi)} V_i(\pi,\bm{\sigma}) \\
&\leq \frac{1}{4},
\end{align*} 
where the last inequality holds strictly if $\pi$ is not full information. 
Let $\overline{\pi}$ be full information. 
Then, $V(\overline{\pi})+\mathcal{V}_i(\overline{\pi})-p_i(\overline{\pi})=\frac{1}{4}+0-0=\frac{1}{4}$ for all $i$. Thus, the information structure that maximizes buyer surplus is full information, or $\varepsilon_B^*=0$ 

The seller's surplus is monotonically increasing with respect to $\mathcal{V}_i(\pi)$ (for $i = 1, t+1, 2t+1, \dots$). 
Thus, Theorem~\ref{thm:ternary_signal} implies that the information structure which maximizes $\mathcal{V}_S(\pi|t)$ is equivalent to $\pi'$ that satisfies $\supp{\mu'}\subset\{0,\frac{1}{2},1\}$.
Then,
\begin{align*}
     \mathcal{V}_{S}(\pi(\varepsilon)|t)
    &=(1-\delta^t) \sum_{k=0}^\infty \delta^{kt}\mathcal{V}_{kt+1}(\pi(\varepsilon))\\
    &=\frac{1-\delta^t}{4} \sum_{k=0}^\infty \delta^{kt}(\varepsilon-\varepsilon^{kt+1})\\
    &=\frac{\delta^t}{4}\cdot\frac{\varepsilon(1-\varepsilon^t)}{1-\delta^t\varepsilon^t}.
\end{align*}
Define $f(\varepsilon)=\frac{\varepsilon(1-\varepsilon^t)}{1-\delta^t\varepsilon^t}$. 
We have
\begin{align*}
    f'(\varepsilon)=\frac{\delta^t\varepsilon^{2t}+[(t-1)\delta^t-(t+1)]\varepsilon^t+1}{(1-\delta^t\varepsilon^t)^2}.
\end{align*}
Note that $f'(0)=1>0$ and $f'(1)=\frac{t(\delta^t-1)}{(1-\delta^t)^2}<0$ hold. 
Since the numerator of $f'(\varepsilon)$ is a quadratic with respect to $\varepsilon^t$, $f(\varepsilon)$ is maximized at $\varepsilon_S^*$ which satisfies $f'(\varepsilon^*_S)=0$. 
The simple calculation yields
 \begin{align*}
     \varepsilon^*_S=
     \left(\frac{t+1-(t-1)\delta^t-\sqrt{[t+1-(t-1)\delta^t]^2-4\delta^t}}{2\delta^t}\right)^\frac{1}{t}.
 \end{align*}
Consider  $\mathcal{V}^*(\pi|t)$. 
Note that
 \begin{align*}
     \mathcal{V}^*(\pi|t)&=\alpha \mathcal{V}_B(\pi|t)+(1-\alpha)\mathcal{V}_S(\pi|t)\\&=
     \alpha[V(\pi)+(1-\delta)\sum_{i=1}^\infty \delta^{i-1} \mathcal{V}_i(\pi)-(1-\delta^t)\sum_{k=0}^\infty \delta^{kt}\mathcal{V}_{kt+1}(\pi)]\\&\quad\quad\quad\quad\quad\quad\quad+(1-\alpha)(1-\delta^t)\sum_{k=0}^\infty \delta^{kt}\mathcal{V}_{kt+1}(\pi)\\
     &=\alpha V(\pi)+\alpha(1-\delta)\sum_{i=1}^\infty \delta^{i-1} \mathcal{V}_i(\pi)+(1-2\alpha)(1-\delta^t)\sum_{k=0}^\infty \delta^{kt}\mathcal{V}_{kt+1}(\pi) 
     \end{align*}
     If $\alpha\leq 1/2$, then all terms are increased by replacing them with the ternary information structure. Hence, the information structure which maximizes $\mathcal{V}^*(\pi|t)$ is equivalent to $\pi'$ that satisfies $\supp{\mu'}\subset\{0,\frac{1}{2},1\}$.
If $\alpha\geq 1/2$, then 
 \begin{align*}
     \mathcal{V}^*(\pi|t)
     &=\alpha V(\pi)+\alpha(1-\delta)\sum_{i=1}^\infty \delta^{i-1} \mathcal{V}_i(\pi)+(1-2\alpha)(1-\delta^t)\sum_{k=0}^\infty \delta^{kt}\mathcal{V}_{kt+1}(\pi) \\
     & \leq \alpha V(\pi)+\alpha(1-\delta)\sum_{i=1}^\infty \delta^{i-1} \mathcal{V}_i(\pi) \\
     & = \alpha(1-\delta)\sum_{i=1}^\infty \delta^{i-1} [\mathcal{V}_i(\pi) + V(\pi)] \\
     &=\alpha(1-\delta)\sum_{i=1}^\infty \delta^{i-1}\max_{\bm{\sigma}\in B^i(\pi)}V_i(\pi,\bm{\sigma})\\
     & \leq \frac{\alpha}{4}. 
\end{align*}
Let $\overline{\pi}$ be full information. 
Then, $\mathcal{V}^*(\overline{\pi}|t)= \frac{\alpha}{4}$. 
Thus, $\varepsilon^*=0$ is optimal if $\alpha \geq \frac{1}{2}$. 
\end{proof}


\titleformat{\section}
		{\Large\bfseries}     
         {Appendix \thesection:}
        {0.5em}
        {}
        []

\renewcommand{\thetheorem}{A.\arabic{theorem}}
\setcounter{theorem}{0}

 \appendix 

\section{Omitted Proofs}
 To prove this theorem, we first prove the following lemma.
\begin{lemma}\label{lem:split}
     If $\supp{\mu_i}\cap(0,\frac{1}{2})\neq \emptyset$ and $\supp{\mu_i}\cap(\frac{1}{2},1)\neq \emptyset$ for $i=1,2$, there exists $x \in \supp{\mu_1\otimes \mu_2}$ such that 
         \begin{align*}
            & 0\in \supp{\mu_1^*\otimes \mu_2^*|(\mu_1\otimes \mu_2)=x}\\
             &1\in \supp{\mu_1^*\otimes \mu_2^*|(\mu_1\otimes \mu_2)=x},
         \end{align*}
         where $\pi^{*}$ is an information structure derived from following split from $\pi$: splitting $l\in\supp{\mu}\cap (0,1/2)$ and $h\in \supp{\mu}\cap (1/2,1)$ into $\{0,1/2\}$ and $\{1/2,1\}$, respectively.
        \end{lemma}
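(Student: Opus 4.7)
The plan is to exhibit a specific joint signal realization whose split can produce the two extreme combined posteriors with positive probability. By hypothesis, I pick $l\in\supp{\mu_1}\cap(0,1/2)$ and $h\in\supp{\mu_2}\cap(1/2,1)$, and let $x$ be the combined posterior arising from the independent pair $(l,h)$; this $x$ lies in $\supp{\mu_1\otimes\mu_2}$ because $(l,h)$ occurs with positive probability in at least one state.

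I then compute the state-conditional distributions of the split. Writing the likelihoods of the signal $l$ as $(P(l\mid H),P(l\mid L))=(q_H,q_L)$ with $q_L>q_H$ (since $l<1/2$), Bayes plausibility for splitting $l$ into $\{0,1/2\}$ pins down the two components: the $1/2$-component must carry likelihoods $(q_H,q_H)$ (equal in both states, hence uninformative), and the $0$-component must carry likelihoods $(0,q_L-q_H)$, i.e., it is realized only in state $L$. A symmetric computation for $h$ shows that its $1$-component is realized only in state $H$, while its $1/2$-component is uninformative.

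Putting these together: conditional on the original signals being $(l,h)$ and the state being $L$, agent $2$'s split is forced onto the $1/2$-component of $h$, while agent $1$'s split lands on the $0$-component with probability $(q_L-q_H)/q_L>0$, yielding combined posterior $0$. Symmetrically, in state $H$, agent $1$'s split is forced onto $1/2$ while agent $2$'s split lands on $1$ with strictly positive probability, yielding combined posterior $1$. Since both states have positive probability conditional on $(l,h)$ (as $l,h\in(0,1)$), both $0$ and $1$ lie in $\supp{\mu_1^*\otimes\mu_2^*\mid(\mu_1,\mu_2)=(l,h)}$, and hence also in $\supp{\mu_1^*\otimes\mu_2^*\mid\mu_1\otimes\mu_2=x}$ because the event $\{(\mu_1,\mu_2)=(l,h)\}$ has positive probability within $\{\mu_1\otimes\mu_2=x\}$.

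The main obstacle I anticipate is a mild bookkeeping issue: conditioning on the combined-posterior value $x$ may pool other signal realizations that produce the same $x$, so one must certify that $(l,h)$ contributes a positive share of that conditional event, which follows from the independence of signals across agents. A secondary care point is that cross-combinations such as ``$0$ for agent $1$ and $1$ for agent $2$'' have zero probability (they encode contradictory state information), but this never harms the argument since only one witnessing realization in each direction is required and both have been exhibited explicitly.
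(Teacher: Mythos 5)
Your proposal is correct and follows essentially the same route as the paper's proof: pick a low posterior for agent 1 and a high posterior for agent 2, observe that the split of the low signal reaches posterior $0$ only in state $L$ (where the high signal's split is forced to $1/2$) and the split of the high signal reaches $1$ only in state $H$ (where the low signal's split is forced to $1/2$), and conclude that both extreme combined posteriors occur with positive probability since both states are possible given $(l,h)$. Your explicit handling of the pooling issue when conditioning on the value $x$ rather than on the pair $(l,h)$ is a small point the paper leaves implicit, but the substance is identical.
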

        \begin{proof}
     Take $y\in \supp{\mu_1}$ and $z \in \supp{\mu_2}$. Without loss of generality, suppose that $S_1(y)=\{s_1\}$, $S_2(z)=\{s_2\}, S^*_1(0)=\{s^{*0}_1\},  S^*_1(\frac{1}{2})=\{s^{*\frac{1}{2}}_1\},  S^*_1(1)=\{s^{*1}_1\},  S^*_2(0)=\{s^{*0}_2\},  S^*_2(\frac{1}{2})=\{s^{*\frac{1}{2}}_2\}$, and $ S^*_2(1)=\{s^{*1}_2\}$. Consider the situation where $s_1$ and $s_2$ are received under $\pi_1\otimes \pi_2$.
          Note that 
          \begin{align*}
              &\pi^*_1(s_1^{*\frac{1}{2}}|H,s_1)=\min\left\{1,\frac{\pi_1(s_1|L)}{\pi_1(s_1|H)}\right\}\\
              &\pi^*_1(s_1^{*0}|H,s_1)=0\\
              &\pi^*_1(s_1^{*1}|H,s_1)=\max\left\{0,\frac{\pi_1(s_1|H)-\pi_1(s_1|L)}{\pi_1(s_1|H)}\right\}\\
              &\pi^*_1(s_1^{*\frac{1}{2}}|L,s_1)=\min\left\{1,\frac{\pi_1(s_1|H)}{\pi_1(s_1|L)}\right\}\\
              &\pi^*_1(s_1^{*0}|L,s_1)=\max\left\{0,\frac{\pi_1(s_1|L)-\pi_1(s_1|H)}{\pi_1(s_1|L)}\right\}\\
              &\pi^*_1(s_1^{*1}|L,s_1)=0
          \end{align*}
          and
            \begin{align*}
              &\pi^*_2(s_2^{*\frac{1}{2}}|H,s_2)=\min\left\{1,\frac{\pi_2(s_2|L)}{\pi_2(s_2|H)}\right\}\\
              &\pi^*_2(s_2^{*0}|H,s_2)=0\\
              &\pi^*_2(s_2^{*1}|H,s_2)=\max\left\{0,\frac{\pi_2(s_2|H)-\pi_2(s_2|L)}{\pi_2(s_2|H)}\right\}\\
              &\pi^*_2(s_2^{*\frac{1}{2}}|L,s_2)=\min\left\{1,\frac{\pi_2(s_2|H)}{\pi_2(s_2|L)}\right\}\\
              &\pi^*_2(s_2^{*0}|L,s_2)=\max\left\{0,\frac{\pi_2(s_2|L)-\pi_2(s_2|H)}{\pi_2(s_2|L)}\right\}\\
              &\pi^*_2(s^{*1}|L,s_1)=0.
          \end{align*}
        Therefore,
        \begin{align*}
            &(\pi_1^*\otimes \pi_2^*)((s_1^{*\frac{1}{2}},s_2^{*\frac{1}{2}})|H,(s_1,s_2))=\min\left\{1,\frac{\pi_1(s_1|L)}{\pi_1(s_1|H)}\right\}\min\left\{1,\frac{\pi_2(s_2|L)}{\pi_2(s_2|H)}\right\}\\
            &(\pi_1^*\otimes \pi_2^*)((s_1^{*\frac{1}{2}},s_2^{*\frac{1}{2}})|L,(s_1,s_2))=\min\left\{1,\frac{\pi_1(s_1|H)}{\pi_1(s_1|L)}\right\}\min\left\{1,\frac{\pi_2(s_2|H)}{\pi_2(s_2|L)}\right\}\\
        \end{align*}
        Take $y\in \supp{\mu_1}\cap(0,\frac{1}{2})$ and $z \in \supp{\mu_2}\cap(\frac{1}{2},1)$. Let $x=\frac{yz}{yz+(1-y)(1-z)}$. Then,
        \begin{align*}
         &   (\mu^*_1\otimes \mu^*_2 )(\frac{1}{2}|H,\mu_1=y,\mu_2=z)=\frac{\pi_2(s_2|L)}{\pi_2(s_2|H)}\\
         & (\mu^*_1\otimes \mu^*_2 )(1|H,\mu_1=y,\mu_2=z)=\frac{\pi_2(s_2|H)-\pi_2(s_2|L)}{\pi_2(s_2|H)}\\
           &   (\mu^*_1\otimes \mu^*_2 )(\frac{1}{2}|L,\mu_1=y,\mu_2=z)=\frac{\pi_1(s_1|H)}{\pi_1(s_1|L)}\\
         & (\mu^*_1\otimes \mu^*_2 )(0|L,\mu_1=y,\mu_2=z)=\frac{\pi_1(s_1|L)-\pi_1(s_1|H)}{\pi_1(s_1|L)}\\
        \end{align*}
        Since $Prob(H|\mu_1=y,\mu_2=z)=x$, we have
        \begin{align*}
            &(\mu^*_1\otimes \mu^*_2 )(1|\mu_1=y,\mu_2=z)=x\cdot \frac{\pi_2(s_2|H)-\pi_2(s_2|L)}{\pi_2(s_2|H)}>0\\
            &(\mu^*_1\otimes \mu^*_2 )(0|\mu_1=y,\mu_2=z)=(1-x)\cdot \frac{\pi_1(s_1|L)-\pi_1(s_1|H)}{\pi_1(s_1|L)}>0
        \end{align*}
        Therefore, 
        \begin{align*}
            & 0\in \supp{\mu_1^*\otimes \mu_2^*|(\mu_1\otimes \mu_2)=x}\\
             &1\in \supp{\mu_1^*\otimes \mu_2^*|(\mu_1\otimes \mu_2)=x}.
         \end{align*}
\end{proof}
\begin{lemma}\label{lem:equivalent}
    If $\pi$ and $\pi'$ satisfy $\supp{\mu}\cap(0,1)\subset (0,\frac{1}{2}]$, $\supp{\mu'}\cap(0,1)\subset (0,\frac{1}{2}]$, and $\pi(\mu=1|H)=\pi'(\mu'=1|H)$, then $\pi\sim \pi'$. Similarly, if $\pi$ and $\pi'$ satisfy $\supp{\mu}\cap(0,1)\subset [\frac{1}{2},1)$, $\supp{\mu'}\cap(0,1)\subset [\frac{1}{2},1)$, and $\pi(\mu=0|L)=\pi'(\mu'=0|L)$, then $\pi\sim \pi'$.
\end{lemma}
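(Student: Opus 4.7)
The approach is to show that, under the support restriction $\supp{\mu}\subset\{0\}\cup(0,\tfrac12]\cup\{1\}$, both $V(\pi)$ and every $\mathcal{V}_i(\pi)$ depend on $\pi$ only through the single number $q:=\pi(\mu=1\mid H)$. Since by hypothesis $q=q'$, equivalence then follows immediately; the symmetric claim (supports in $\{0\}\cup[\tfrac12,1)\cup\{1\}$ with matched $\pi(\mu=0\mid L)$) follows by swapping the labels $H\leftrightarrow L$ and $1\leftrightarrow 0$ throughout.

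First I would characterize equilibrium play by induction on $i$. Under the support assumption, a signal with $\mu=1$ is conclusive for $H$ (it has zero probability under $L$) and every other signal yields a private belief at most $1/2$. Claim: in the selected equilibrium, agent $i$ chooses $a_i=1$ iff some $j\le i$ received $\mu=1$. The base case for agent~$1$ is immediate away from $\mu=1/2$; at the knife-edge $\mu=1/2$ indifference is resolved toward $a=0$ by the selection rule. For the inductive step, after a history of zeros Bayes' rule gives the public posterior
\[
\mathbb{P}(H\mid a_1=\dots=a_{i-1}=0)=\tfrac{(1-q)^{i-1}}{(1-q)^{i-1}+1}\le \tfrac12,
\]
so combining with any private belief $\mu\in(0,\tfrac12]$ produces a compound likelihood ratio at most $1$, hence action $0$; conversely, a single $a_j=1$ fully reveals $H$, and every subsequent agent strictly prefers $1$.

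Given this characterization the payoff computation is short. In state $L$ every agent plays $0$ and earns $0$; in state $H$ agent $i$ plays $1$ exactly when at least one of the i.i.d.\ signals $s_1,\dots,s_i$ is the conclusive $H$-signal, an event of probability $1-(1-q)^i$. Therefore
\[
V_i(\pi)=\tfrac12\cdot\bigl[1-(1-q)^i\bigr]\cdot\tfrac12=\tfrac14\bigl[1-(1-q)^i\bigr],
\]
so $V(\pi)=V_1(\pi)=q/4$ and $\mathcal{V}_i(\pi)=V_i(\pi)-V(\pi)=\tfrac14\bigl[(1-q)-(1-q)^i\bigr]$. Both expressions depend on $\pi$ only through $q$, so $q=q'$ yields $V(\pi)=V(\pi')$ and $\mathcal{V}_i(\pi)=\mathcal{V}_i(\pi')$ for every $i$, which is exactly $\pi\sim\pi'$.

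The delicate part is the equilibrium characterization in the first step. I need to argue that the tie-break at $\mu=1/2$ selects $a=0$ (so $a=1$ remains a perfect signal of $H$) and to rule out equilibria in which some agent plays $1$ at a non-conclusive signal after a non-trivial history — otherwise both the public belief dynamics and the payoffs would depend on the additional probabilities $\pi(\mu=1/2\mid H)$ and $\pi(\mu=0\mid L)$. Once this is secured, the freedom in the probabilities of the remaining signal realizations is harmless: they never lead to action~$1$, so they never enter the payoff expression, which is precisely why matching only $\pi(\mu=1\mid H)$ suffices for equivalence.
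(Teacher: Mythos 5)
Your construction and payoff formula are the same as the paper's: you build the equilibrium in which an agent plays $1$ iff she receives the conclusive $H$-signal or some predecessor has played $1$, and you arrive at $V_i=\tfrac14[1-(1-q)^i]$ with $q=\pi(\mu=1\mid H)$, matching the paper's $\overline{V}_i(\pi)$. The gap is exactly the one you flag at the end but do not close. The quantity $\mathcal{V}_i(\pi)$ is defined as a maximum over a lexicographically selected set of equilibria, so it is not enough to verify that your strategy profile is \emph{an} equilibrium (which is all your inductive step with the public posterior $\tfrac{(1-q)^{i-1}}{(1-q)^{i-1}+1}$ establishes); you must show it attains $\max_{\bm{\sigma}\in B^i(\pi)}V_i(\pi,\bm{\sigma})$. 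Other equilibria genuinely exist here — e.g., an agent indifferent at $\mu=1/2$ may play $1$, after which $a=1$ is no longer conclusive and the continuation payoffs depend on $\pi(\mu=\tfrac12\mid H)$ — and it is not self-evident a priori which tie-breaking yields later agents higher payoffs, since the resulting partitions of the signal space are not nested. Appealing to "the selection rule" does not resolve this, because the selection rule selects by payoff, which is the very thing you would need to compare across equilibria.

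The paper closes this gap with an upper bound imported from \citet{sato2025value} (their Lemma 5): for \emph{every} equilibrium $\bm{\sigma}$, $V_i(\pi,\bm{\sigma})\le\overline{V}_i(\pi)$, the payoff from directly observing all $i$ signals. Since the constructed equilibrium attains $\overline{V}_i(\pi)$ for every $i$ simultaneously, it survives every stage of the lexicographic selection and pins down $\mathcal{V}_i(\pi)=\overline{V}_i(\pi)-V(\pi)$, which depends on $\pi$ only through $q$. To complete your argument you need this dominance step (or an equivalent direct comparison against all equilibria); without it, the claim that the payoffs "depend only on $q$" is established only for one particular equilibrium, not for the selected one that defines $\mathcal{V}_i$.
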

\begin{proof}
Suppose that  $\pi$ and $\pi'$ satisfy $\supp{\mu}\cap(0,1)\subset (0,\frac{1}{2}]$, $\supp{\mu'}\cap(0,1)\subset (0,\frac{1}{2}]$, and $\pi(\mu=1|H)=\pi'(\mu'=1|H)$.
    First, show that $\overline{V}_i(\pi)=\overline{V}_i(\pi')$ for all $i$. 
    If agent $i$ can observe the signal $i$ times, it is one of the optimal strategies for agent $i$ to take action 1 if and only if he or she receives a signal that induces $\mu=1$ at least once. 
    Hence,
    \begin{align*}
        \overline{V}_i(\pi)&=\frac{1}{4}\cdot [1-(1-\pi(\mu=1|H))^i]\\
        &=\frac{1}{4}\cdot[1-(1-\pi'(\mu'=1|H))^i]\\
        &=\overline{V}_i(\pi').
    \end{align*}
    Especially, we have $V(\pi)=V(\pi')$.

    Next, show that $\mathcal{V}_i(\pi)=\overline{V}_i(\pi)-V(\pi)$.
    By Lemma 5 in \cite{sato2025value}, we know that $\mathcal{V}_i(\pi)\leq \overline{V}_i(\pi)-V(\pi)$.
    Consider the following strategy profile $\bm{\sigma}$: agent $i$ takes action 1 if and only if he or she receives a signal that induces $\mu=1$ or at least one agent before $i$ takes action 1. Then, this is an equilibrium and it follows that
    $V_i(\pi,\bm{\sigma})=\overline{V}_i(\pi)$. 
    Hence, $\mathcal{V}_i(\pi)\geq \overline{V}_i(\pi)-V(\pi)$. 
    Thus, $\mathcal{V}_i(\pi)= \overline{V}_i(\pi)-V(\pi)$. 
    By the same argument, we have $\mathcal{V}_i(\pi')= \overline{V}_i(\pi')-V(\pi')$. 
    Therefore, $\mathcal{V}_i(\pi)=\mathcal{V}_i(\pi')$ for all $i$, and  we have $\pi\sim \pi'$.
    Since the other case is symmetric, the proof is omitted.
\end{proof}
\begin{proof}[Proof of Theorem 1]
    Take any information structure $\pi$.
    Let $\pi^{*}$ be the information structure derived from the split mentioned in Lemma~\ref{lem:split}.
    Then, $\pi^{*} \succsim_{B} \pi$. 
    Note that
     \begin{align*}
        &V(\pi)=\sum_{x\in \supp{\mu}\cap(\frac{1}{2},1]}\left[\frac{x}{2}-\frac{1-x}{2}\right]\cdot \pi(\mu=x)\\
        &\quad \quad\,\,=\sum_{x\in \supp{\mu}\cap(\frac{1}{2},1]} \left[x-\frac{1}{2}\right]\cdot \pi(\mu=x)\\
        &\quad\quad\,\,=\pi(\mu> \frac{1}{2})\cdot \left(\mathbb{E}_\pi\left[\mu\mid \mu> \frac{1}{2}\right]-\frac{1}{2}\right).\\
        &V(\pi^*)=\frac{1}{2}\cdot \pi^*(\mu^*=1)\\
        &\quad\quad\,\,=\pi^*(\mu^*=1)\cdot \left[1-\frac{1}{2}\right]\\
        &\quad\quad\,\,=\pi^*(\mu^*>\frac{1}{2})\cdot \left(\mathbb{E}_{\pi^*}\left[\mu^*\mid \mu^*> \frac{1}{2}\right]-\frac{1}{2}\right)\\
        &\quad\quad\,\,=\pi(\mu>\frac{1}{2})\cdot \left(\mathbb{E}_{\pi}\left[\mu \mid \mu> \frac{1}{2}\right]-\frac{1}{2}\right).
    \end{align*}
    Hence, $V(\pi)=V(\pi^*)$.
    By Lemma 6 in \citet{sato2025value}, we know that, for all BNE $\sigma^*$,
    \begin{align*}
        V_i(\pi^*,\sigma^*)=\overline{V}_i(\pi^*),
 \end{align*}
    
    Since  $\overline{V}_i(\pi)\geq V_i(\pi,\bm{\sigma})$ (Lemma 5 in  \citet{sato2025value}), it is sufficient to show  that $\overline{V}_i(\pi^*)\geq \overline{V}_i(\pi)$ holds.
    Note that
    \begin{align*}
        \overline{V}_i(\pi)=\sum_{x\in \supp{\mu\otimes\mu} } g(x)\mu^{\otimes i}(x),
        \end{align*}
        where 
        \begin{align*}
            g(x)=\begin{cases}
                0&(x\in[0,\frac{1}{2}])\\
                x-\frac{1}{2}&(x\in [\frac{1}{2},1]).
            \end{cases}
        \end{align*}
        By Lemma 4 in \citet{sato2025value}, $\mu^{*\otimes i} $ is a mean preserving spread of $\mu^{\otimes i}$.
        Thus, \footnote{Precisely, this is not correct because $\mu^{*\otimes i}$ and $\mu^{\otimes i}$ may be independent. Formally, $\mu^{*\otimes i} $ is a mean preserving spread of $\mu^{\otimes i}$ if there exists distribution $F$ and $F^*$ such that
        $\mu^{\otimes i}\sim F$, $\mu^{*\otimes i}\sim F^*$, and $\sum_{y\in \supp{F^* }}y\cdot f^*(y|f=x)=x$. Here, for simplicity, we assume that $\mu^{\otimes i}$ and $\mu^{*\otimes i}$ itself satisfy this equation.}
        
         \begin{align*}
        \sum_{y\in \supp{\mu^{*\otimes i}} }y\cdot \mu^{*\otimes i}(y|\mu^{\otimes i}=x)=x,
        \end{align*}
        for each $x$. 
        Then,
        \[
    \sum_{y\in \supp{\mu^{*\otimes i}} }g(y)\cdot\mu^{*\otimes i}(y|\mu^{\otimes i}=x)
    \geq  g\left(\sum_{y\in \supp{\mu^{*\otimes i}} }y\cdot\mu^{*\otimes i}(y|\mu^{\otimes i}=x)\right)
    = g(x),
        \]
        where the inequality comes from Jensen's inequality.

Since
            \begin{align*}
        \overline{V}_i(\pi^*)=\sum_{x\in \supp{\mu^{\otimes i}} }\left[\sum_{y\in \supp{\mu^{*\otimes i}} } g(y)\cdot \mu^{*\otimes i}(y|\mu^{\otimes i}=x)\right]\cdot \mu^{\otimes i}(x),
        \end{align*}
         we have 
         \begin{align*}
             \overline{V}_i(\pi^*)&\geq\sum_{x\in \supp{\mu^{\otimes i}}}g(x)\cdot \mu^{\otimes i}(x)\\&= \overline{V}_i(\pi)
         \end{align*} for all $i$. Also, we have
         $\overline{V}_i(\pi^*)>\overline{V}_i(\pi)$ if there exists $x \in \supp{\mu^{\otimes i}}$ such that
         \begin{align*}
             \sum_{y\in \supp{\mu^{*\otimes i}} }g(y)\cdot \mu^{*\otimes i}(y|\mu^{\otimes i}=x)>g(x).
         \end{align*}
         In other words, $\overline{V}_i(\pi^*)>V_i(\pi,\sigma)$ if there exists $x \in \supp{\mu^{\otimes i}}$ such that 
         \begin{align*}
            & 0\in \supp{\mu^{*\otimes i}|\mu^{\otimes i}=x}\\
             &1\in  \supp{\mu^{*\otimes i}|\mu^{\otimes i}=x}.
         \end{align*}
    
By Lemma~\ref{lem:split}, if $\supp{\mu}\cap(0,\frac{1}{2})\neq \emptyset$ and $\supp{\mu}\cap(\frac{1}{2},1)\neq \emptyset$, we have $\overline{V}_2(\pi^*)>\overline{V}_2(\pi)$ and 
         $\supp{\mu\otimes \mu}\cap(0,\frac{1}{2})\neq \emptyset$ and $\supp{\mu\otimes \mu}\cap(\frac{1}{2},1)\neq \emptyset$.
         Hence, $\overline{V}_3(\pi^*)>\overline{V}_3(\pi)$. Analogously, it follows that
         $\overline{V}_i(\pi^*)>\overline{V}_i(\pi)$ for all $i\geq 2$ if $\supp{\mu}\cap(0,\frac{1}{2})\neq \emptyset$ and $\supp{\mu}\cap(\frac{1}{2},1)\neq \emptyset$.
         Hence, the value of history for each agent $i$ ($i\geq 2$) is maximized at $\pi$ only if $\supp{\mu}\cap(0,1)\subset (0,\frac{1}{2}]$ or $\supp{\mu}\cap(0,1)\subset [\frac{1}{2},1)$. Suppose that $\pi$ satisfies this condition. Then,
         Lemma~\ref{lem:equivalent} guarantees that there exists $\pi'$ such that $\pi\sim \pi'$ and $\supp{\mu'}\subset\{0,\frac{1}{2},1\}$. Since $\mathcal{V}_i(\pi)=0$ when $\pi$ is either no or full information, Theorem 1 holds.
    
\end{proof}

\singlespacing 
\bibliography{Reference}

\begin{thebibliography}{47}
\newcommand{\enquote}[1]{``#1''}
\providecommand{\natexlab}[1]{#1}
\providecommand{\url}[1]{\texttt{#1}}
\providecommand{\urlprefix}{URL }
\providecommand{\bibAnnoteFile}[1]{%
  \IfFileExists{#1}{\begin{quotation}\noindent\textsc{Key:} #1\\
  \textsc{Annotation:}\ \input{#1}\end{quotation}}{}}
\providecommand{\bibAnnote}[2]{%
  \begin{quotation}\noindent\textsc{Key:} #1\\
  \textsc{Annotation:}\ #2\end{quotation}}

\bibitem[{Acemoglu et~al.(2011)Acemoglu, Dahleh, Lobel, and Ozdaglar}]{acemoglu2011bayesian}
Acemoglu, Daron, Munther~A Dahleh, Ilan Lobel, and Asuman Ozdaglar (2011), \enquote{Bayesian learning in social networks.} \emph{The Review of Economic Studies}, 78, 1201--1236.
\bibAnnoteFile{acemoglu2011bayesian}

\bibitem[{Admati and Pfleiderer(1986)}]{admati1986monopolistic}
Admati, Anat~R and Paul Pfleiderer (1986), \enquote{A monopolistic market for information.} \emph{Journal of Economic Theory}, 39, 400--438.
\bibAnnoteFile{admati1986monopolistic}

\bibitem[{Admati and Pfleiderer(1988)}]{admati1988selling}
Admati, Anat~R and Paul Pfleiderer (1988), \enquote{Selling and trading on information in financial markets.} \emph{The American Economic Review}, 78, 96--103.
\bibAnnoteFile{admati1988selling}

\bibitem[{Admati and Pfleiderer(1990)}]{admati1990direct}
Admati, Anat~R and Paul Pfleiderer (1990), \enquote{Direct and indirect sale of information.} \emph{Econometrica: Journal of the Econometric Society}, 901--928.
\bibAnnoteFile{admati1990direct}

\bibitem[{Ali(2018)}]{ali2018herding}
Ali, S~Nageeb (2018), \enquote{Herding with costly information.} \emph{Journal of Economic Theory}, 175, 713--729.
\bibAnnoteFile{ali2018herding}

\bibitem[{Arieli et~al.(2024)Arieli, Babichenko, and Hann-Caruthers}]{arielipositive}
Arieli, Itai, Yakov Babichenko, and Wade Hann-Caruthers (2024), \enquote{The pisitive effect of garbling on social learning.}
\bibAnnoteFile{arielipositive}

\bibitem[{Arieli et~al.(2023)Arieli, Gradwohl, and Smorodinsky}]{arieli2023herd}
Arieli, Itai, Ronen Gradwohl, and Rann Smorodinsky (2023), \enquote{Herd design.} \emph{American Economic Review: Insights}, 5, 460--476.
\bibAnnoteFile{arieli2023herd}

\bibitem[{Arieli and Mueller-Frank(2019)}]{arieli2019multidimensional}
Arieli, Itai and Manuel Mueller-Frank (2019), \enquote{Multidimensional social learning.} \emph{The Review of Economic Studies}, 86, 913--940.
\bibAnnoteFile{arieli2019multidimensional}

\bibitem[{Arieli and Mueller-Frank(2021)}]{arieli2021general}
Arieli, Itai and Manuel Mueller-Frank (2021), \enquote{A general analysis of sequential social learning.} \emph{Mathematics of Operations Research}, 46, 1235--1249.
\bibAnnoteFile{arieli2021general}

\bibitem[{Athey and Levin(2018)}]{athey2018value}
Athey, Susan and Jonathan Levin (2018), \enquote{The value of information in monotone decision problems.} \emph{Research in Economics}, 72, 101--116.
\bibAnnoteFile{athey2018value}

\bibitem[{Banerjee and Fudenberg(2004)}]{banerjee2004word}
Banerjee, Abhijit and Drew Fudenberg (2004), \enquote{Word-of-mouth learning.} \emph{Games and Economic Behavior}, 46, 1--22.
\bibAnnoteFile{banerjee2004word}

\bibitem[{Banerjee(1992)}]{banerjee1992simple}
Banerjee, Abhijit~V (1992), \enquote{A simple model of herd behavior.} \emph{The Quarterly Journal of Economics}, 107, 797--817.
\bibAnnoteFile{banerjee1992simple}

\bibitem[{Ben-Shahar and Sulganik(2024)}]{ben2024new}
Ben-Shahar, Danny and Eyal Sulganik (2024), \enquote{New characterization of {Blackwell}'s order for subsets of information structures.} \emph{Economics Letters}, 235, 111515.
\bibAnnoteFile{ben2024new}

\bibitem[{Bergemann and Bonatti(2015)}]{bergemann2015selling}
Bergemann, Dirk and Alessandro Bonatti (2015), \enquote{Selling cookies.} \emph{American Economic Journal: Microeconomics}, 7, 259--294.
\bibAnnoteFile{bergemann2015selling}

\bibitem[{Bergemann and Bonatti(2019)}]{bergemann2019markets}
Bergemann, Dirk and Alessandro Bonatti (2019), \enquote{Markets for information: An introduction.} \emph{Annual Review of Economics}, 11, 85--107.
\bibAnnoteFile{bergemann2019markets}

\bibitem[{Bergemann et~al.(2018)Bergemann, Bonatti, and Smolin}]{bergemann2018design}
Bergemann, Dirk, Alessandro Bonatti, and Alex Smolin (2018), \enquote{The design and price of information.} \emph{American economic review}, 108, 1--48.
\bibAnnoteFile{bergemann2018design}

\bibitem[{Bikhchandani et~al.(1992)Bikhchandani, Hirshleifer, and Welch}]{bikhchandani1992theory}
Bikhchandani, Sushil, David Hirshleifer, and Ivo Welch (1992), \enquote{A theory of fads, fashion, custom, and cultural change as informational cascades.} \emph{Journal of Political Economy}, 100, 992--1026.
\bibAnnoteFile{bikhchandani1992theory}

\bibitem[{Blackwell(1951)}]{blackwell1951comparison}
Blackwell, David (1951), \enquote{Comparison of experiments.} In \emph{Proceedings of the second Berkeley symposium on mathematical statistics and probability}, volume~2, 93--103, University of California Press.
\bibAnnoteFile{blackwell1951comparison}

\bibitem[{Blackwell(1953)}]{blackwell1953equivalent}
Blackwell, David (1953), \enquote{Equivalent comparisons of experiments.} \emph{The Annals of Mathematical Statistics}, 265--272.
\bibAnnoteFile{blackwell1953equivalent}

\bibitem[{Callander and H{\"o}rner(2009)}]{callander2009wisdom}
Callander, Steven and Johannes H{\"o}rner (2009), \enquote{The wisdom of the minority.} \emph{Journal of Economic theory}, 144, 1421--1439.
\bibAnnoteFile{callander2009wisdom}

\bibitem[{{\c{C}}elen and Kariv(2004)}]{ccelen2004observational}
{\c{C}}elen, Bo{\u{g}}a{\c{c}}han and Shachar Kariv (2004), \enquote{Observational learning under imperfect information.} \emph{Games and Economic Behavior}, 47, 72--86.
\bibAnnoteFile{ccelen2004observational}

\bibitem[{De~Lara and Gossner(2017)}]{de2017instrumental}
De~Lara, Michel and Olivier Gossner (2017), \enquote{An instrumental approach to the value of information.} Technical report.
\bibAnnoteFile{de2017instrumental}

\bibitem[{Es{\H{o}} and Szentes(2007)}]{esHo2007price}
Es{\H{o}}, P{\'e}ter and Bal{\'a}zs Szentes (2007), \enquote{The price of advice.} \emph{The Rand Journal of Economics}, 38, 863--880.
\bibAnnoteFile{esHo2007price}

\bibitem[{Frankel and Kamenica(2019)}]{frankel2019quantifying}
Frankel, Alexander and Emir Kamenica (2019), \enquote{Quantifying information and uncertainty.} \emph{American Economic Review}, 109, 3650--3680.
\bibAnnoteFile{frankel2019quantifying}

\bibitem[{Gale and Kariv(2003)}]{gale2003bayesian}
Gale, Douglas and Shachar Kariv (2003), \enquote{Bayesian learning in social networks.} \emph{Games and Economic Behavior}, 45, 329--346.
\bibAnnoteFile{gale2003bayesian}

\bibitem[{Hann-Caruthers et~al.(2018)Hann-Caruthers, Martynov, and Tamuz}]{hann2018speed}
Hann-Caruthers, Wade, Vadim~V Martynov, and Omer Tamuz (2018), \enquote{The speed of sequential asymptotic learning.} \emph{Journal of Economic Theory}, 173, 383--409.
\bibAnnoteFile{hann2018speed}

\bibitem[{Hendricks et~al.(2012)Hendricks, Sorensen, and Wiseman}]{hendricks2012observational}
Hendricks, Kenneth, Alan Sorensen, and Thomas Wiseman (2012), \enquote{Observational learning and demand for search goods.} \emph{American Economic Journal: Microeconomics}, 4, 1--31.
\bibAnnoteFile{hendricks2012observational}

\bibitem[{Kartik et~al.(2024)Kartik, Lee, Liu, and Rappoport}]{kartik2024beyond}
Kartik, Navin, SangMok Lee, Tianhao Liu, and Daniel Rappoport (2024), \enquote{Beyond unbounded beliefs: How preferences and information interplay in social learning.} \emph{Econometrica}, 92, 1033--1062.
\bibAnnoteFile{kartik2024beyond}

\bibitem[{Kultti and Miettinen(2006)}]{kultti2006herding}
Kultti, Klaus and Paavo Miettinen (2006), \enquote{Herding with costly information.} \emph{International Game Theory Review}, 8, 21--31.
\bibAnnoteFile{kultti2006herding}

\bibitem[{Kultti and Miettinen(2007)}]{kultti2007herding}
Kultti, Klaus~K and Paavo~A Miettinen (2007), \enquote{Herding with costly observation.} \emph{The BE Journal of Theoretical Economics}, 7, 0000102202193517041320.
\bibAnnoteFile{kultti2007herding}

\bibitem[{Le et~al.(2017)Le, Subramanian, and Berry}]{le2017information}
Le, Tho~Ngoc, Vijay~G Subramanian, and Randall~A Berry (2017), \enquote{Information cascades with noise.} \emph{IEEE Transactions on Signal and Information Processing over Networks}, 3, 239--251.
\bibAnnoteFile{le2017information}

\bibitem[{Lehmann(1988)}]{lehmann1988comparing}
Lehmann, EL (1988), \enquote{Comparing location experiments.} \emph{Annals of Statistics}, 16, 521--533.
\bibAnnoteFile{lehmann1988comparing}

\bibitem[{Liang and Mu(2020)}]{liang2020complementary}
Liang, Annie and Xiaosheng Mu (2020), \enquote{Complementary information and learning traps.} \emph{The Quarterly Journal of Economics}, 135, 389--448.
\bibAnnoteFile{liang2020complementary}

\bibitem[{Lobel and Sadler(2015)}]{lobel2015information}
Lobel, Ilan and Evan Sadler (2015), \enquote{Information diffusion in networks through social learning.} \emph{Theoretical Economics}, 10, 807--851.
\bibAnnoteFile{lobel2015information}

\bibitem[{Lorecchio(2022)}]{lorecchio2022persuading}
Lorecchio, Caio Paes~Leme (2022), \enquote{Persuading crowds.} \emph{UB Economics--Working Papers, 2022, E22/434}.
\bibAnnoteFile{lorecchio2022persuading}

\bibitem[{Mekonnen et~al.(2023)Mekonnen, Murra-Anton, and Pakzad-Hurson}]{mekonnen2023persuaded}
Mekonnen, Teddy, Zeky Murra-Anton, and Bobak Pakzad-Hurson (2023), \enquote{Persuaded search.} \emph{arXiv preprint arXiv:2303.13409}.
\bibAnnoteFile{mekonnen2023persuaded}

\bibitem[{Mueller-Frank and Pai(2016)}]{mueller2016social}
Mueller-Frank, Manuel and Mallesh~M Pai (2016), \enquote{Social learning with costly search.} \emph{American Economic Journal: Microeconomics}, 8, 83--109.
\bibAnnoteFile{mueller2016social}

\bibitem[{Parakhonyak and Vikander(2023)}]{parakhonyak2023information}
Parakhonyak, Alexei and Nick Vikander (2023), \enquote{Information design through scarcity and social learning.} \emph{Journal of Economic Theory}, 207, 105586.
\bibAnnoteFile{parakhonyak2023information}

\bibitem[{Persico(2000)}]{persico2000information}
Persico, Nicola (2000), \enquote{Information acquisition in auctions.} \emph{Econometrica}, 68, 135--148.
\bibAnnoteFile{persico2000information}

\bibitem[{Rosenberg and Vieille(2019)}]{rosenberg2019efficiency}
Rosenberg, Dinah and Nicolas Vieille (2019), \enquote{On the efficiency of social learning.} \emph{Econometrica}, 87, 2141--2168.
\bibAnnoteFile{rosenberg2019efficiency}

\bibitem[{Sato and Shimizu(2025)}]{sato2025value}
Sato, Hiroto and Konan Shimizu (2025), \enquote{Value of information in social learning.} \emph{arXiv preprint arXiv:2503.05015}.
\bibAnnoteFile{sato2025value}

\bibitem[{Sgroi(2002)}]{sgroi2002optimizing}
Sgroi, Daniel (2002), \enquote{Optimizing information in the herd: Guinea pigs, profits, and welfare.} \emph{Games and Economic Behavior}, 39, 137--166.
\bibAnnoteFile{sgroi2002optimizing}

\bibitem[{Smith and S{\o}rensen(2000)}]{smith2000pathological}
Smith, Lones and Peter S{\o}rensen (2000), \enquote{Pathological outcomes of observational learning.} \emph{Econometrica}, 68, 371--398.
\bibAnnoteFile{smith2000pathological}

\bibitem[{Smith and Sorensen(2013)}]{smith2013rational}
Smith, Lones and Peter~Norman Sorensen (2013), \enquote{Rational social learning by random sampling.} \emph{Available at SSRN 1138095}.
\bibAnnoteFile{smith2013rational}

\bibitem[{Smith et~al.(2021)Smith, S{\o}rensen, and Tian}]{smith2021informational}
Smith, Lones, Peter~Norman S{\o}rensen, and Jianrong Tian (2021), \enquote{Informational herding, optimal experimentation, and contrarianism.} \emph{The Review of Economic Studies}, 88, 2527--2554.
\bibAnnoteFile{smith2021informational}

\bibitem[{Song(2016)}]{song2016social}
Song, Yangbo (2016), \enquote{Social learning with endogenous observation.} \emph{Journal of Economic Theory}, 166, 324--333.
\bibAnnoteFile{song2016social}

\bibitem[{Xu(2023)}]{xu2023social}
Xu, Wenji (2023), \enquote{Social learning through action-signals.} \emph{Available at SSRN 4498779}.
\bibAnnoteFile{xu2023social}

\end{thebibliography}
\addcontentsline{toc}{section}{Reference}

\end{document}